\documentclass[11pt]{article}

\def\showauthornotes{0}

\def\showkeys{0}
\def\showdraftbox{0}
\def\showcolorlinks{1}
\def\usemicrotype{1}
\def\showfixme{0}


\usepackage{etex}


\usepackage[l2tabu, orthodox]{nag}


\usepackage{xspace,enumerate}

\usepackage[dvipsnames]{xcolor}


\usepackage[american]{babel}


\usepackage{mathtools}




\usepackage{amsthm}

\newtheorem{theorem}{Theorem}[section]
\newtheorem*{theorem*}{Theorem}

\newtheorem{proposition}[theorem]{Proposition}
\newtheorem*{proposition*}{Proposition}
\newtheorem{lemma}[theorem]{Lemma}
\newtheorem*{lemma*}{Lemma}
\newtheorem{corollary}[theorem]{Corollary}

\newtheorem*{conjecture*}{Conjecture}

\newtheorem*{fact*}{Fact}

\newtheorem*{hypothesis*}{Hypothesis}

\theoremstyle{definition}
\newtheorem{definition}[theorem]{Definition}
\newtheorem*{definition*}{Definition}

\newtheorem{algorithm}[theorem]{Algorithm}

\newtheorem*{problem*}{Problem}

\theoremstyle{remark}
\newtheorem{claim}[theorem]{Claim}
\newtheorem*{claim*}{Claim}

\newtheorem*{remark*}{Remark}

\newtheorem*{observation*}{Observation}


\usepackage[
letterpaper,
top=1in,
bottom=0.9in,
left=1.0in,
right=1.0in]{geometry}


\usepackage{amsmath,amsfonts,amssymb}

\linespread{1.0}


\ifnum\showkeys=1
\usepackage[color]{showkeys}
\fi


\ifnum\showcolorlinks=1
\usepackage[
pagebackref,
colorlinks=true,
urlcolor=blue,
linkcolor=blue,
citecolor=OliveGreen,
]{hyperref}
\fi

\ifnum\showcolorlinks=0
\usepackage[
pagebackref,
colorlinks=false,
pdfborder={0 0 0}
]{hyperref}
\fi

\usepackage{prettyref}
\newcommand{\pref}{\prettyref}


\newcommand{\savehyperref}[2]{\texorpdfstring{\hyperref[#1]{#2}}{#2}}

\newrefformat{eq}{\savehyperref{#1}{\textup{(\ref*{#1})}}}
\newrefformat{eqn}{\savehyperref{#1}{\textup{(\ref*{#1})}}}
\newrefformat{lem}{\savehyperref{#1}{Lemma~\ref*{#1}}}
\newrefformat{def}{\savehyperref{#1}{Definition~\ref*{#1}}}
\newrefformat{thm}{\savehyperref{#1}{Theorem~\ref*{#1}}}
\newrefformat{cor}{\savehyperref{#1}{Corollary~\ref*{#1}}}
\newrefformat{cha}{\savehyperref{#1}{Chapter~\ref*{#1}}}
\newrefformat{sec}{\savehyperref{#1}{Section~\ref*{#1}}}
\newrefformat{app}{\savehyperref{#1}{Appendix~\ref*{#1}}}
\newrefformat{tab}{\savehyperref{#1}{Table~\ref*{#1}}}
\newrefformat{fig}{\savehyperref{#1}{Figure~\ref*{#1}}}
\newrefformat{hyp}{\savehyperref{#1}{Hypothesis~\ref*{#1}}}
\newrefformat{alg}{\savehyperref{#1}{Algorithm~\ref*{#1}}}
\newrefformat{rem}{\savehyperref{#1}{Remark~\ref*{#1}}}
\newrefformat{item}{\savehyperref{#1}{Item~\ref*{#1}}}
\newrefformat{step}{\savehyperref{#1}{Step~\ref*{#1}}}
\newrefformat{conj}{\savehyperref{#1}{Conjecture~\ref*{#1}}}
\newrefformat{fact}{\savehyperref{#1}{Fact~\ref*{#1}}}
\newrefformat{prop}{\savehyperref{#1}{Proposition~\ref*{#1}}}
\newrefformat{prob}{\savehyperref{#1}{Problem~\ref*{#1}}}
\newrefformat{claim}{\savehyperref{#1}{Claim~\ref*{#1}}}
\newrefformat{relax}{\savehyperref{#1}{Relaxation~\ref*{#1}}}
\newrefformat{red}{\savehyperref{#1}{Reduction~\ref*{#1}}}
\newrefformat{part}{\savehyperref{#1}{Part~\ref*{#1}}}
\newrefformat{obs}{\savehyperref{#1}{Observation~\ref*{#1}}}
\newrefformat{corr}{\savehyperref{#1}{Corollary~\ref*{#1}}}


\newcommand{\Sref}[1]{\hyperref[#1]{\S\ref*{#1}}}

\usepackage{nicefrac}



\ifnum\usemicrotype=1
\usepackage{microtype}
\fi

\ifnum\showauthornotes=1
\newcommand{\Authornote}[2]{{\sffamily\small\color{red}{[#1: #2]}}}
\newcommand{\Authornotecolored}[3]{{\sffamily\small\color{#1}{[#2: #3]}}}
\newcommand{\Authorcomment}[2]{{\sffamily\small\color{gray}{[#1: #2]}}}
\newcommand{\Authorstartcomment}[1]{\sffamily\small\color{gray}[#1: }

\newcommand{\Authorfnote}[2]{\footnote{\color{red}{#1: #2}}}
\newcommand{\Authorfixme}[1]{\Authornote{#1}{\textbf{??}}}
\newcommand{\Authormarginmark}[1]{\marginpar{\textcolor{red}{\fbox{\Large #1:!}}}}
\else
\newcommand{\Authornote}[2]{}
\newcommand{\Authornotecolored}[3]{}
\newcommand{\Authorcomment}[2]{}
\newcommand{\Authorstartcomment}[1]{}

\newcommand{\Authorfnote}[2]{}
\newcommand{\Authorfixme}[1]{}
\newcommand{\Authormarginmark}[1]{}
\fi





\ifnum\showfixme=0

\fi

\usepackage{boxedminipage}














\newcommand{\Esymb}{\mathbb{E}}
\newcommand{\Psymb}{\mathbb{P}}

\DeclareMathOperator*{\E}{\Esymb}

\DeclareMathOperator*{\ProbOp}{\Psymb}

\renewcommand{\Pr}{\ProbOp}










\newcommand{\textparen}[1]{\text{(#1)}}

\ifx\because\undefined
\newcommand{\because}[1]{\textparen{because #1}}
\else
\renewcommand{\because}[1]{\textparen{because #1}}
\fi





















\newcommand{\mper}{\,.}

\newcommand\bdot\bullet


\ifx\mathds\undefined 

\else
}
\fi





\DeclareMathOperator{\polylog}{polylog}














\newcommand{\cC}{\mathcal C}


\ifnum\showdraftbox=1
\newcommand{\draftbox}{\begin{center}
  \fbox{%
    \begin{minipage}{2in}%
      \begin{center}%
          \Large\textsc{Working Draft}\\%
        Please do not distribute%
      \end{center}%
    \end{minipage}%
  }%
\end{center}
\vspace{0.2cm}}
\else
\newcommand{\draftbox}{}
\fi


\let\epsilon=\varepsilon

\numberwithin{equation}{section}



\newcommand\MYcurrentlabel{xxx}

\newcommand{\MYstore}[2]{%
  \global\expandafter \def \csname MYMEMORY #1 \endcsname{#2}%
}

\newcommand{\MYload}[1]{%
  \csname MYMEMORY #1 \endcsname%
}

\newcommand{\MYnewlabel}[1]{%
  \renewcommand\MYcurrentlabel{#1}%
  \MYoldlabel{#1}%
}

\newcommand{\MYdummylabel}[1]{}

\newcommand{\torestate}[1]{%
  \let\MYoldlabel\label%
  \let\label\MYnewlabel%
  #1%
  \MYstore{\MYcurrentlabel}{#1}%
  \let\label\MYoldlabel%
}

\newcommand{\restatetheorem}[1]{%
  \let\MYoldlabel\label
  \let\label\MYdummylabel
  \begin{theorem*}[Restatement of \prettyref{#1}]
    \MYload{#1}
  \end{theorem*}
  \let\label\MYoldlabel
}

\newcommand{\restatedef}[1]{%
  \let\MYoldlabel\label
  \let\label\MYdummylabel
  \begin{definition*}[Restatement of \prettyref{#1}]
    \MYload{#1}
  \end{definition*}
  \let\label\MYoldlabel
}

\newcommand{\restatelemma}[1]{%
  \let\MYoldlabel\label
  \let\label\MYdummylabel
  \begin{lemma*}[Restatement of \prettyref{#1}]
    \MYload{#1}
  \end{lemma*}
  \let\label\MYoldlabel
}

\newcommand{\restateprop}[1]{%
  \let\MYoldlabel\label
  \let\label\MYdummylabel
  \begin{proposition*}[Restatement of \prettyref{#1}]
    \MYload{#1}
  \end{proposition*}
  \let\label\MYoldlabel
}

\newcommand{\restatefact}[1]{%
  \let\MYoldlabel\label
  \let\label\MYdummylabel
  \begin{fact*}[Restatement of \prettyref{#1}]
    \MYload{#1}
  \end{fact*}
  \let\label\MYoldlabel
}

\newcommand{\restateobs}[1]{%
  \let\MYoldlabel\label
  \let\label\MYdummylabel
  \begin{observation*}[Restatement of \prettyref{#1}]
    \MYload{#1}
  \end{observation*}
  \let\label\MYoldlabel
}
\newcommand{\restate}[1]{%
  \let\MYoldlabel\label
  \let\label\MYdummylabel
  \MYload{#1}
  \let\label\MYoldlabel
}


\newcommand{\addreferencesection}{
  \phantomsection
  \addcontentsline{toc}{section}{References}
}


\newcommand{\eps}{\epsilon}


\let\origparagraph\paragraph
\renewcommand{\paragraph}[1]{\origparagraph{#1.}}


\allowdisplaybreaks


\sloppy



\usepackage[newenum,newitem,increaseonly]{paralist}
\setlength{\pltopsep}{3pt}
\setlength{\plpartopsep}{3pt}
\setlength{\plitemsep}{2.5pt}

\usepackage{enumitem}


\usepackage{comment}

\let\citet\cite

\theoremstyle{definition}

\DeclareUrlCommand\email{}

\newcommand{\restateproblem}[2]{%
  \let\MYoldlabel\label
  \let\label\MYdummylabel
  \begin{problem*}[Restatement of \prettyref{#1}, {#2}]
    \MYload{#1}
    \end{problem*}
  \let\label\MYoldlabel
}

\newcommand{\tO}{\widetilde{O}}
\newcommand{\tOmega}{\widetilde{\Omega}}
\newcommand{\tTheta}{\widetilde{\Theta}}


\newcommand{\ignore}[1]{}%

\title{Computing exact minimum cuts without knowing the graph}

\author{%
\normalsize
Aviad Rubinstein\thanks{UC Berkeley,
  \protect\email{aviad@eecs.berkeley.edu}. This research was supported by Microsoft Research PhD Fellowship. It was also supported in part by NSF grant CCF1408635 and by Templeton Foundation grant 3966. This work was done in part at the Simons Institute for the Theory of Computing.}
\and
\normalsize
Tselil Schramm\thanks{UC Berkeley, \protect\email{tscrhamm@cs.berkeley.edu}. Supported by an NSF Graduate Research Fellowship (1106400).}
\and
\normalsize
Matt Weinberg\thanks{Princeton University, \protect\email{smweinberg@princeton.edu}. Work done in part while the author was visiting the Simons Instutute for the Theory of Computing.}
}

\date{}

\begin{document}

\maketitle

\draftbox

\thispagestyle{empty}

\begin{abstract}

We give query-efficient algorithms for the global min-cut and the s-t cut problem in unweighted, undirected graphs.
Our oracle model is inspired by the submodular function minimization problem:
on query $S \subset V$, the oracle returns the size of the cut between $S$ and $V \setminus S$.

We provide algorithms computing an exact minimum $s$-$t$ cut in $G$ with $\tO(n^{5/3})$ queries, and computing an exact global minimum cut of $G$ with only $\tO(n)$ queries (while learning the graph requires $\tTheta(n^2)$ queries).

\end{abstract}
\clearpage
\thispagestyle{empty}

\clearpage
\setcounter{page}{1}

\section{Introduction}
We give new algorithms for the minimum cut and $s$-$t$ minimum cut problems in an unweighted, undirected graph $G = (V,E)$.
Our algorithms do not assume access to the entire graph $G$;
rather, they only interact with an oracle that, on query $S$, returns the value $c(S)$ of the cut between $S$ and $V \setminus S$.
Our goal is to minimize the number of queries to the oracle while computing (exact) optimum cuts.

\subsubsection*{Three easy algorithms}

How many queries should we expect to be necessary?
It is not hard to see that $\binom{n}{2} +n  = O(n^2)$ queries suffice:%
\footnote{We follow the standard convention and use $n = |V|$ to denote the number of vertices and $m = |E|$ to denote the number of edges. We also use $\tO(x)$ to denote $O(x \cdot \polylog(x))$, and similarly for $\tilde \Omega(\cdot)$ and $\tilde \Theta(\cdot)$.}
To find out whether there is an edge between $u$ and $v$, we can query the oracle for $\{u\}$,  $\{v\}$, and $\{u,v\}$. The edge is present iff $c(\{u\})+c(\{v\})-c(\{u,v\}) > 0$. After querying all $n$ singletons and $\binom{n}{2}$ pairs, we have learned the entire graph.
In fact, we can improve slightly: the results of the cut queries are linear in $\binom{n}{2}$ unknown variables; thus $\binom{n}{2}$ linearly independent queries suffice.

For sparse graphs we can do even better. One can find a neighbor of a non-isolated vertex $v$ in $O(\log n)$ queries using a ``lion in the desert'' algorithm (see \pref{lem:1edge}). Because of this, we can learn the entire graph using $\tO(n+m)$ queries.
But for dense graphs, $\widetilde \Omega(n^2)$ queries are necessary to learn the entire graph by a simple information theoretic argument (each query reveals at most $O(\log n)$ bits).

\subsubsection*{The search for a lower bound}

Because $\widetilde\Omega(n^2)$ queries are needed to learn the graph, it is natural to conjecture that $\widetilde \Theta(n^2)$ is the optimal query complexity for computing the min cut. Such a lower bound would be of considerable interest: after breakthrough progress in recent years, $\tO(n^2)$ is also the state of the art query complexity for the more general problem of submodular function minimization over subsets of $n$ items \cite{LSW15-submodular_min}.\footnote{Note that the more recent work of~\cite{CLSW16-submodular_pseudo} requires $\tilde{O}(nM^3)$ queries, where the function is integral and $M$ is the maximum value the function takes; for cuts in graphs, $M = n^2/4$!}
Recent work of~\cite{CLSW16-submodular_pseudo} indeed rules out certain kinds of algorithms with subquadratic query complexity, but determining whether submodular function minimization requires $\widetilde \Omega(n^2)$ queries remains an exciting open problem (see Section~\ref{sec:relatedwork} for more discussion), and graph cuts seemed like a promising candidate.

\subsubsection*{Our results}

In defiance of our intuition, we provide algorithms for global minimum cut and minimum $s$-$t$ cut that use a truly subquadratic number of queries. Our main results are:
\begin{theorem}[Global Min Cut]\label{thm:gmc}
There exists a randomized algorithm that with high probability computes an exact global minimum cut in simple graphs using $\tO(n)$ queries.
\end{theorem}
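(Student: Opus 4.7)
The first step is to spend $n$ queries on every singleton, learning $\deg(v) = c(\{v\})$ for each $v$ and hence the minimum degree $\delta$ and the bound $\lambda \le \delta$. If the min cut is attained by an isolated vertex, the minimum-degree vertex is a witness and we are done. Otherwise every min cut has both sides of size $\ge 2$, and I split into two regimes depending on whether $\delta$ is small or large.

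\textbf{Small-$\delta$ regime.} When $\delta \le \polylog(n)$, I explicitly learn a sparse subgraph of $G$ that preserves every cut of size $\le \delta$: specifically, a Nagamochi--Ibaraki $\delta$-connectivity certificate consisting of $\delta$ edge-disjoint spanning forests, together containing at most $\delta(n-1) = \tilde O(n)$ edges. Each forest can be discovered in $\tilde O(n)$ queries by iterating the lion-in-the-desert primitive from \pref{lem:1edge}, which locates an edge between two vertex sets in $O(\log n)$ queries; the oracle on the \emph{residual} graph (with previously-discovered forest edges removed) is easily simulated from the original oracle by subtracting the known contribution of those edges across any query cut. The exact min cut is then computed on the learned certificate using any standard min cut algorithm, with no further queries.

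\textbf{Large-$\delta$ regime.} When $\delta$ is much larger than $\polylog(n)$, I use cut queries to sample uniformly random edges of $G$. The core primitive samples a uniformly random neighbor of a given vertex $u$ in $O(\log n)$ queries by bisection of $V \setminus \{u\}$: at each level, the identity
\[
|E(\{u\}, T)| = \tfrac{1}{2}\bigl(c(\{u\}) + c(T) - c(\{u\} \cup T)\bigr)
\]
yields the number of neighbors of $u$ in any candidate set $T$ using two queries, so I recurse into a half of $T$ chosen with probability proportional to that count. Composed with degree-proportional vertex sampling, this produces a uniformly random edge per $O(\log n)$ queries. These samples feed a random-contraction scheme: maintain a contracted multigraph by fusing endpoints of sampled edges, noting that cut queries on the contracted graph are answered by a single original cut query on the corresponding preimage set. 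A suitable Karger-style analysis (for instance via tree packing and $2$-respecting cuts) then shows that $\tilde O(n)$ random samples pin down the exact min cut with high probability.

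\textbf{Main obstacle.} The hard part is the large-$\delta$ regime: one must argue that $\tilde O(n)$ random edge samples, together with ``free'' cut queries on the contracted graph, suffice to identify the exact integer-valued min cut. A naive invocation of Karger--Stein incurs $\Omega(n^2)$ contractions and thus too many queries, and a Bencz\'ur--Karger sparsifier of $\tilde O(n)$ edges only preserves cuts within a multiplicative $1 \pm 1/\sqrt{\lambda}$, which is coarser than the additive integer gap of $1$ when $\lambda$ is large. Overcoming this requires a more structural approach, such as Gabow's tree-packing/$2$-respecting-cut analysis, adapted to the cut-query model and amortized across the recursion so that the total query cost stays within the $\tilde O(n)$ budget.
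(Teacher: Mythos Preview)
Your small-$\delta$ regime via Nagamochi--Ibaraki certificates is sound (and in fact works whenever the min cut $\lambda$ is $\polylog(n)$, not just when $\delta$ is, since a $\lambda$-certificate suffices; you can guess $\lambda$ geometrically). The gap is in the large regime. You correctly identify that neither Karger--Stein nor a Bencz\'ur--Karger sparsifier alone delivers the exact cut in $\tilde O(n)$ queries, but you do not actually close the gap: ``a suitable Karger-style analysis (for instance via tree packing and $2$-respecting cuts)'' is a gesture, not an algorithm. The concrete obstacle is that even after obtaining $O(\log n)$ trees from a sampled skeleton, locating the minimum $2$-respecting cut of a single tree is an $\Omega(n^2)$-sized search space, and Karger's dynamic program for this step requires reading the edge set, which costs $\Theta(m)$ in our model. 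You give no mechanism for evaluating or searching over $2$-respecting cuts with only $\tilde O(n)$ cut queries, so as written the large-$\delta$ regime is incomplete.

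The paper's route is different and avoids the $2$-respecting-cut bottleneck entirely. It first runs Karger's random-contraction algorithm only partway---until the total edge count drops to $\approx cn$, where $c$ is a (guessed) min-cut value---so that with constant probability the min cut survives and the average degree is now comparable to $c$ (\pref{lem:karger3}). Uniform sampling with probability $\tilde\Theta(1/c)$ then yields a $\tilde O(n)$-edge graph in which \emph{every} cut is preserved to a $(1\pm\epsilon)$ factor (\pref{lem:sample}). The key structural ingredient you are missing is \pref{lem:cover}: in any simple graph with minimum degree $d$, the union of all edges participating in \emph{any} non-singleton $(1+\epsilon)$-approximate min cut has only $O(n)$ edges. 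Hence after contracting every pair of vertices that lie on the same side of every approximate min cut in the sampled graph, only $O(n)$ original edges remain between super-vertices; these are learned exactly in $\tilde O(n)$ queries and the exact min cut is read off. This combinatorial cover lemma is what converts an approximate sparsifier into an exact algorithm without the $2$-respecting machinery.
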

\begin{theorem}[Min $s$-$t$ Cut]\label{thm:stc}
There exists a randomized algorithm that with high probability computes an exact $s$-$t$ minimum cut in simple graphs using $\tO(n^{5/3})$ queries.
\end{theorem}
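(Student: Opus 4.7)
The plan is to compute a maximum $s$--$t$ flow via augmenting paths, using a cut-query implementation of BFS in the residual graph that runs in only $\tilde O(n)$ queries per call. The central identity is $e_G(A,B) = \tfrac{1}{2}(c(A)+c(B)-c(A\cup B))$ for disjoint sets $A,B$; combining this with binary search over $B$ (the ``lion in the desert'' trick from \pref{lem:1edge}) lets us find one vertex $v\notin A$ with an edge to $A$, or certify none exists, in $O(\log n)$ queries. Running this layer by layer recovers the BFS tree of $s$ in $\tilde O(n)$ queries. To simulate BFS in the residual graph after augmenting paths $P_1,\dots,P_\ell$ have been fixed, I keep these paths in memory; the residual cut value $c_{\mathrm{res}}(S)$ differs from $c_G(S)$ by a correction that depends only on how the $P_i$ cross $S$, which is read off without further queries.

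With this primitive in hand, running Ford--Fulkerson finds the max flow with $F$ residual-BFS calls, costing $\tilde O(Fn)$ queries, where $F$ is the min $s$-$t$ cut value. This is already $\tilde O(n^{5/3})$ whenever $F\le n^{2/3}$. For the remaining regime $F>n^{2/3}$, I would precompute a Nagamochi--Ibaraki $k$-edge-connectivity certificate $H\subseteq G$ with $k=n^{2/3}$: a subgraph of at most $kn$ edges in which every cut of value $\le k$ in $G$ is preserved exactly. It is built by iteratively extracting $k$ edge-disjoint spanning forests $F_1,\dots,F_k$, each via one BFS on a graph whose cut oracle is simulated from the $G$-oracle together with the previously-extracted forests. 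That is $k$ BFSes at $\tilde O(n)$ queries each, for $\tilde O(kn)=\tilde O(n^{5/3})$ queries total; max flow on $H$ is then computed offline without additional queries.

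The main obstacle is the case $F>n^{2/3}$, because the Nagamochi--Ibaraki certificate only guarantees that the max flow in $H$ equals $\min(F,k)$: if the answer on $H$ comes out equal to $k$ we have only certified $F\ge k$ and still need to pin down the true $F$. My plan for this case is to continue augmenting in $G$ starting from the flow of value $k$ produced by $H$, exploiting that the residual graph now has min $s$-$t$ cut at most $F-k$ and that $H$ already accounts for ``most'' of the cut structure; if a structural argument shows the residual problem is effectively sparse (for instance, a second sparsification layer suffices, or the remaining augmenting paths can be batched across phases of a unit-capacity Dinic implementation so that each phase costs only $\tilde O(n)$ queries), a doubling or recursive argument on $k$ should close the budget. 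I expect this is the most delicate step and the place where an idea beyond ``BFS plus Nagamochi--Ibaraki'' is genuinely needed.
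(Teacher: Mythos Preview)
Your augmenting-path idea is sound for small flow values: the identity $e_{\mathrm{res}}(A\to B)=e_G(A,B)-f(A\to B)$ is correct (each term $1-f(u,v)\in\{0,1,2\}$ is nonnegative, so positivity of the sum is equivalent to existence of a residual edge), and with binary search this gives a residual BFS in $\tilde O(n)$ queries. Hence Ford--Fulkerson costs $\tilde O(Fn)$ queries, which is indeed $\tilde O(n^{5/3})$ when $F\le n^{2/3}$. This is a genuinely different and pleasant argument for the small-$F$ regime that the paper does not use.

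The gap you flag for $F>n^{2/3}$ is real, and your proposed fixes do not close it. The Nagamochi--Ibaraki certificate only guarantees $c_H(S)\ge\min\{c_G(S),k\}$; when $F>k$ the min $s$-$t$ cut of $H$ need not correspond to one in $G$, so you cannot read off the true optimizer, and ``continue augmenting from the $H$-flow'' still leaves up to $F-k$ augmentations, i.e.\ $\tilde O(n(F-k))$ queries, which can be $\tilde\Theta(n^2)$. The Dinic suggestion also does not go through as stated: while unit-capacity Dinic has $O(n^{2/3})$ phases, a blocking-flow phase is not $\tilde O(n)$ queries under your primitive, because the DFS must discover every edge it traverses (augmenting-path edges plus dead ends), and you give no bound on this total below $\Theta(m)$.

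The paper's proof takes a completely different route that sidesteps the large-$F$ difficulty. It first builds a Bencz\'ur--Karger sparsifier $H$ with $\epsilon=n^{-1/3}$; the query-efficient edge-strength estimation of \pref{sec:bksamp} does this in $\tilde O(n/\epsilon^2)=\tilde O(n^{5/3})$ queries. All subsequent flow computation happens \emph{offline in $H$} with no further queries. The key structural ingredient is the Flow Cover Lemma (\pref{lem:flow-cover}): any acyclic $s$-$t$ flow of value $f$ in a graph with integer weights in $[0,W]$ has total edge weight $O(n\sqrt{fW})$. Since edges in $H$ have weight at most $\epsilon^2 n$ and $f\le(1+\epsilon)n$, the flow in $H$ has total weight $O(\epsilon n^2)$. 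After removing the flow and contracting every $3\epsilon n$-strongly-connected component of the remainder (safe because any true min $s$-$t$ cut has value at most $(1+\epsilon)F\le f_H+2\epsilon n$ in $H$, hence at most $2\epsilon n$ in $H'$), \pref{lem:kcon} bounds the surviving weight by $O(\epsilon n^2)$ as well. Thus the contracted $G'$ has $O(\epsilon n^2)=O(n^{5/3})$ edges, which are then learned exactly. The missing idea in your plan is precisely this flow-cover bound combined with offline computation in a sparsifier; there is no augmenting-path work in the query model at all.
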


It is worth mentioning that while our focus is query efficiency, all our algorithms run in polynomial time ($\tO(n^2)$ or faster).

\subsection*{Techniques}
All our algorithms are quite simple.
Both results can be obtained using the following ``meta-algorithm'':
 (1) subsample a subquadratic number of edges;
 (2) compress the (original) graph by contracting all ``safe'' edges (i.e. those that do not cross the optimum cut, with high confidence, based on the subsample); and
 (3) learn all remaining edges.

It is not hard to see that uniform sampling does not work for Step (1). For example, for the $s$-$t$ minimum cut problem, consider a giant clique that is disconnected from the component containing $s$ and $t$ --- sampling the edges from the clique with the same probability as those from the ``real'' graph is clearly a bad idea.

Instead of uniform sampling, we build on the {\em edge strength}-based sampling due to Bencz\'ur and Karger~\cite{BK15-strength_sampling}, which in $\tO(m)$ queries yields graphs with few edges that approximate every cut well.
Calculating the edge-strengths with $o(m)$ queries is non-trivial.
Instead of computing the strength of every edge, we sub-sample the graph at different resolutions to classify the vertices of the graph into strongly connected components. See \pref{sec:bksamp} for details.

For the global minimum cut problem we also provide an even simpler algorithm, which avoids edge-strength sampling:
in a preprocessing step, we contract edges uniformly at random, as in Karger's algorithm \cite{Karger93-Kargers_Algorithm}.
After the right number of edge contractions, it suffices to sample edges uniformly at random in Step (1).

\subsection{Related work}\label{sec:relatedwork}
Graph cut minimization is a classical algorithms topic, with work dating back to Ford and Fulkerson \cite{FF62-Ford_Fulkerson}, and too many consequent results to list.
Of particular relevance to the present paper are the works of Karger and co-authors, including \cite{Karger93-Kargers_Algorithm,Karger99-skeletons,KS96-Karger_Stein,BK15-strength_sampling,KL15-st_cuts}, which give randomized algorithms for computing minimum cuts and related quantities efficiently---in particular, this line of work establishes methodology for randomly compressing graphs while preserving cut information, which has been used in numerous follow-up works. Though our goal is \emph{query} efficiency rather than runtime efficiency, we very much rely on their insights.

Another work of note is the recent result of Kawarabayashi and Thorup \cite{KT15-deterministic_min_cut}, who show that the global minimum cut can be computed deterministically in $\tO(m)$ time.
Though their setting differs from ours, our works are similar in that we too require structural theorems about the number of edges participating in minimum cuts in the graph (e.g. Lemma~\ref{lem:cover}).

As mentioned above, our initial motivation for studying the min cut problem in this oracle model came from submodular function minimization (SFM).
SFM was first studied by Gr\"{o}tschel, Lov\'asz, and Schrijver in the 1980's \cite{GlLS81-submodular_min}, and has since been a popular topic of study (see e.g. \cite{Fujishige05-textbook} for a thorough treatment).
For a submodular function over $n$ items, the current best general algorithm requires $\tO(n^2)$ oracle queries \cite{LSW15-submodular_min}.
Furthermore,~\cite{CLSW16-submodular_pseudo} suggest that $\tTheta(n^2)$ is indeed the right bound: they prove an $\Omega(n^2)$ lower bound on the number of oracle calls made by a restricted class of algorithms
(those that access the submodular function by naively evaluating the subgradient of its Lov\'asz extension).
For general algorithms, the current best lower bound is $\Omega(n)$ queries, due to Harvey \cite{Harvey08-matroid_intersection} building on the work of Hajnal, Mass and Turan \cite{HMT88}.
Graph cuts and $s$-$t$ cuts are canonical examples of symmetric and asymmetric submodular functions,\footnote{A submodular function is symmetric if $f(S) = f(\bar{S})$ for all $S$. $\emptyset$ and $[n]$ are always minimizers of a symmetric submodular function, so the ``symmetric submodular function minimization problem'' is to find a non-trivial minimizer (i.e. the minimizer $\notin \{\emptyset,[n]\}$), of which global min cut is a special case.} and while it would be natural to conjecture that $\tOmega(n^2)$ queries are needed for graph cut problems, our work demonstrates that these problems do not provide a lower bound matching \cite{LSW15-submodular_min}'s algorithm (at least in unweighted graphs, and for randomized algorithms).

Note that there are works that bypass the $\tTheta(n^2)$ oracle queries barrier for special cases of SFM.
For example,~\cite{CLSW16-submodular_pseudo} provide an algorithm with $\tTheta(nM^3)$ oracle queries when the function value is integral and bounded within $[1,M]$ (for min cut $M$ may be as large as $\Theta(n^2)$). Another special case of interest are decomposable submodular functions (e.g.~\cite{SK10-decomposable, 
NJJ14-decomposable}).

We also mention a sequence of papers \cite{CK08,M10,BM11} which study the query efficiency of \emph{learning} a graph under a similar query model (in which each cut query can be implemented in $O(1)$ queries).
This series of papers establishes that $\Theta(m\log (n^2/m)/\log m)$ queries are to learn a graph on $n$ vertices and $m$ edges.
This improves upon our naive algorithm for learning a graph (see \pref{lem:1edge}) by polylogarithmic factors.
Graph reconstruction has also been considered with different oracles, e.g. a distance oracle~\cite{KKU95-distance_oracle, MZ13-distance_oracle, KMZ15-distance_oracles, KKL16-distance_oracle}.

To our knowledge, no other works have previously considered the query complexity of graph cuts.
However, the task of compressing graph cut information into efficient structures has been studied before from a variety of angles:  sketching \cite{ACKQWZ16-sketches,KK15-sketches}, spectral sparsifiers \cite{BSS14-sparseifier}, streaming spectral sparsifiers \cite{KapralovLMMS14}, skeletons \cite{Karger99-skeletons,BK15-strength_sampling},  backbones \cite{CEPP12-backbones}, and cactus representations \cite{DKL76-cactus, NV91-cactus},
to list a few. Note that an overwhelming majority of these works necessarily lose some (small) approximation factor through compression, and exact solutions are rare, but exist (e.g.~\cite{AhnGM12}).

There is also an indirect connection between our work and lower bounds for distributed graph algorithms (e.g. \cite{SarmaHKKNPPW12,DruckerKO14}),
since our algorithms can be used to obtain upper bounds on the two-party communication complexity of min cuts in some models.\footnote{In a model where Alice and Bob can jointly compute a cut query in $O(\log n)$ communication, and have shared randomness, \pref{thm:gmc} (\pref{thm:stc}) provides a randomized protocol with $\tilde{O}(n)$ (resp. $\tilde{O}(n^{5/3})$) communication for computing the global (resp. $s$-$t$) min cut.}

\subsection*{Organization}
In \pref{sec:minc1}, we present our simple algorithm for global min cut, as well as important algorithmic primitives (such as subsampling edges).
Then in \pref{sec:bksamp}, we introduce our query-efficient implementation of Bencz\'ur and Karger's edge-strength based sampling, after which we demonstrate its application to global min cut in \pref{sec:minc2}.
Finally, \pref{sec:st} contains our result for min $s$-$t$ cuts.

\subsection*{Discussion and Future Work}
The main take-home message of our work is simple, randomized algorithms for \emph{exact} global and $s$-$t$ min cut with $\tilde{O}(n)$ and $\tilde{O}(n^{5/3})$ queries, respectively. In particular, our algorithm for global min cut learns (up to the polylog factors) just enough information to even specify one of the $2^n$ distinct cuts, and both are well below $\tilde{\Theta}(n^2)$. So in this natural oracle model, it is possible to find the exact global and $s$-$t$ cut \emph{without learning the underlying graph}.

Our work also motivates numerous directions for future work: Are weighted or directed graph cuts computable in $o(n^2)$ queries? Do deterministic min cut algorithms exist with truly subquadratic queries? Or can graph cuts still provide a $\Omega(n^2)$ submodular-function-minimization lower bound (perhaps for deterministic algorithms)? While graph cuts are indeed a very special case of submodular functions, can any of the ideas from our work be used in randomized algorithms for a broader class of submodular function minimization?

\section{Global min-cut in $\tO(n)$ queries}
%
\label{sec:minc1}
We begin by observing that if $G$ has $m$ edges, we can learn $G$ entirely with $\tO(m)$ queries.
This is because locating a single edge takes only $O(\log n)$ queries.

\begin{lemma}[Learning an edge with $O(\log n)$ queries]\label{lem:1edge}
We can learn one neighbor of a vertex $v \in V$ in $O(\log n)$ queries.
\end{lemma}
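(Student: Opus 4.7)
The plan is to binary-search for a neighbor of $v$ by using the cut oracle to count edges between $v$ and any subset $S \subseteq V \setminus \{v\}$. The key observation is that for any $S \subseteq V \setminus \{v\}$, the number of edges between $v$ and $S$ is given by
\[
e(v,S) \;=\; \tfrac{1}{2}\bigl(c(S) + c(\{v\}) - c(S \cup \{v\})\bigr),
\]
which can be verified by counting, for each of the three cuts $c(S)$, $c(\{v\})$, $c(S\cup\{v\})$, how many times an edge of each type (inside $S$, from $S$ to $v$, from $S$ to outside, from $v$ to outside) is charged. Thus with only a constant number of cut queries we can determine the number of neighbors of $v$ that lie in any prescribed set $S$.

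With this primitive in hand, the algorithm is a straightforward ``lion in the desert'' binary search. First query $c(\{v\})$; if it is $0$, then $v$ has no neighbors and we are done. Otherwise, initialize $S := V \setminus \{v\}$, which by construction satisfies $e(v,S) \ge 1$. At each step, partition the current set $S$ into two halves $S_1, S_2$ of roughly equal size, compute $e(v, S_1)$ using the formula above (which costs $O(1)$ queries, since $c(\{v\})$ is cached), and recurse on whichever half has $e(v, \cdot) \ge 1$ (at least one must, by the identity $e(v,S) = e(v,S_1) + e(v,S_2)$). After $\lceil \log_2 n\rceil$ halvings, the set $S$ contains a single vertex $u$, which must satisfy $e(v,u) \ge 1$ and is therefore a neighbor of $v$.

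The total query cost is $O(1)$ to initialize plus $O(1)$ per halving over $O(\log n)$ halvings, giving $O(\log n)$ queries in total. There is no real obstacle here: the only thing to be careful about is the little edge-counting identity above, which is just inclusion–exclusion on cut values.
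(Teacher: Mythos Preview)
Your proposal is correct and is essentially identical to the paper's own proof: both use the inclusion--exclusion identity $e(v,S) = \tfrac12\bigl(c(\{v\}) + c(S) - c(S\cup\{v\})\bigr)$ to count neighbors of $v$ in a set $S$, and then binary-search over $V\setminus\{v\}$ by halving. The only cosmetic difference is that you cache $c(\{v\})$ and check for isolation up front, whereas the paper queries both halves at each level; either way the cost is $O(\log n)$.
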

\begin{proof}
    To find one neighbor of $v$, we perform the following recursive procedure: we partition $V \setminus v$ into two sets $S_1$ and $S_2$ of sizes $\lfloor\frac{n-1}{2}\rfloor,\lceil\frac{n-1}{2}\rceil$, respectively.
    We then query the cut values of $\{v\}$, $S_i$, and $S_i \cup \{v\}$, from which we can infer how many neighbors $v$ has in $S_i$, for $i \in \{1,2\}$ ($\Big(c(\{v\}) + c(S_i) - c(S_i \cup \{v\})\Big)/2$).
    If $v$ has no neighbors, return ``no neighbors''.
    Otherwise, if $v$ has a neighbor in $S_1$, then proceed recursively in $S_1$; otherwise proceed recursively in $S_2$.
\end{proof}

The above observation suffices to learn the entire graph with $\tO(m)$ queries, as it is easy to modify the algorithm to ignore known neighbors of $v$ (if $S_1$ or $S_2$ contain only known neighbors of $v$, ignore them).
If $m = \tO(n)$, Theorem~\ref{thm:gmc} follows easily.

Otherwise, if $m \gg n$, a natural idea is to randomly subsample the edges of $G$ until we are left with a sparse graph, and use this sparse graph to learn useful data about $G$. Indeed, we show that after a preprocessing step of $n$ queries, sampling a unifomrly random edge only requires $O(\log n)$ queries:

\begin{corollary}[Sampling a uniformly random edge with $O(\log n)$ queries]
    Given oracle access to the cut values of a graph on $n$ vertices, after performing $n$ initial queries we can sample a random edge in $O(\log n)$ additional queries (i.e. $k$ uniformly random edges can be drawn in $n + O(k \log n)$ queries).
\end{corollary}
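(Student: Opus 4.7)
The plan is a straightforward two-step sampling scheme. In the preprocessing stage I would query $c(\{v\})$ for each of the $n$ vertices; since the graph is simple and unweighted, this single query directly reveals $\deg(v)$, and summing gives $2m = \sum_v \deg(v)$. That is the full $n$-query budget for preprocessing, and it lets me henceforth treat the degree sequence and $m$ as known.

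To sample one uniformly random edge I would then proceed in two substeps. First, without spending any queries, sample a vertex $v$ with probability $\deg(v)/(2m)$ using the degrees stored in preprocessing. Second, sample a uniformly random neighbor of $v$ by a weighted variant of the lion-in-the-desert procedure of \pref{lem:1edge}: partition the remaining vertices into $S_1,S_2$ of roughly equal size, use three cut queries on $\{v\},S_i,S_i\cup\{v\}$ to compute the number $d_i$ of neighbors of $v$ in $S_i$ (via $d_i=(c(\{v\})+c(S_i)-c(S_i\cup\{v\}))/2$), and recurse into $S_i$ with probability $d_i/(d_1+d_2)$. The recursion halts after $O(\log n)$ levels when the active set is a singleton, which is then declared to be the chosen neighbor $u$. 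This substep uses $O(\log n)$ queries.

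For correctness, a straightforward induction on the recursion shows that conditioned on choosing $v$, each neighbor $u$ of $v$ is returned with probability exactly $1/\deg(v)$. Hence the ordered pair $(v,u)$ is produced with probability $\frac{\deg(v)}{2m}\cdot\frac{1}{\deg(v)}=\frac{1}{2m}$, and summing over the two orderings of each edge shows every edge is sampled with probability exactly $1/m$, i.e. uniformly. Drawing $k$ independent samples reuses the same preprocessing, giving the claimed $n+O(k\log n)$ total queries.

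The only mildly nontrivial point is the weighted recursion: one must ensure that the branching probabilities $d_i/(d_1+d_2)$ indeed induce the uniform distribution on leaves, which is the standard fact that sampling along a tree proportionally to subtree weights yields the weighted distribution on leaves. Everything else is bookkeeping.
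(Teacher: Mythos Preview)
Your proposal is correct and follows essentially the same approach as the paper: preprocess by querying all singleton cuts to learn the degrees, then sample a vertex proportional to degree and run the weighted lion-in-the-desert recursion (branching into $S_i$ with probability proportional to the number of neighbors in $S_i$) to pick a uniform neighbor. Your write-up is in fact slightly more detailed than the paper's, since you spell out the correctness calculation that each edge is returned with probability exactly $1/m$.
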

\begin{proof}
First, as a preprocessing step, we perform $n$ queries to determine the degree of every vertex. 
    Now, we choose a random edge by choosing a random vertex $v$ with probability proportional to its degree, then performing the procedure detailed in the proof of \pref{lem:1edge}, but choosing to recurse on either $S_1$ or $S_2$ randomly with probability proportional to the degree of $v$ into each set.
\end{proof}

Because we can sample random edges, we might hope to subsample $G$ and obtain a sparse graph $G'$ which has the same approximate cut values as $G$. In particular, if the minimum cut of $G$ has value $c$, and we sample each edge independently with probability $\log n /c$, then the subsampled graph $G'$ preserves all cuts with high probability within a $(\log n / c)(1\pm \epsilon)$ factor. However, sampling with probability much smaller than $\log n /c$ will yield poor cut concentration in $G'$.

So if $c \approx \frac{m}{n}$, the next step in our algorithm is to do this simple uniform subsampling and work with $G'$, which will have $\approx n \log n$ edges in expectation. But if $c \ll \frac{m}{n}$, the resulting $G'$ will still have too many edges to learn, so we need some additional work.

Fortunately, when the minimum cut size $c$ is small compared to the average degree, we can preprocess $G$ to an intermediate $G^*$ whose average degree is $\approx c$ without destroying the minimum cut via random contractions.
Our preprocessing step essentially runs Karger's Algorithm \cite{Karger93-Kargers_Algorithm} (reproduced here for completeness) for a well-chosen number of steps (not all the way to termination).

\begin{algorithm}[Karger's Algorithm \cite{Karger93-Kargers_Algorithm}]
    \label{alg:karg}{\color{white}.}\\
    {\bf Input:} A graph $G$.
    \begin{compactenum}
	\item For $j = 1,\ldots,n-2$:
    \begin{compactenum}
	\item Sample a random edge of $G$, and contract its two endoint into a single ``super-vertex''.
	\item Retain multi-edges, but remove self-loops.
    \end{compactenum}
    \end{compactenum}
    {\bf Output:} The cut between the two remaining super-vertices, which form a partition of $G$'s vertices into two sets.
\end{algorithm}
In Karger's seminal paper, he proves that this algorithm finds the minimum cut in a graph with probability at least $\frac{1}{n^2}$, yielding a randomized algorithm for minimum cut.

We will not run Karger's algorithm to its completion, but rather only until there are $cn$ total edges remaining in the graph (we can guess $c$ within a factor of $2$ at the cost of $\log n$ additional iterations).
The following simple lemma shows that with constant probability, the minimum cut will survive:

\begin{lemma}[Karger's Algorithm on small cuts]\torestate{
	\label{lem:karger3}
    Let $G$ be a graph with minimum cut value $c\geq 1$.
    If we run Karger's algorithm on $G$ until there are at most $cn$ edges in the graph, then the minimum cut survives with constant probability.
}
\end{lemma}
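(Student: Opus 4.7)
The plan is to fix any one minimum cut $\cC$ of $G$ (a specific set of $c$ edges), and show that with probability at least $1/e$, no edge of $\cC$ is contracted before the stopping time. This suffices because survival of a single $\cC$ preserves $\cC$ as a cut of size $c$ in the contracted multigraph, and contractions can only increase the min-cut value; hence the min cut value of the contracted multigraph equals $c$.

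The key observation I would use is that at every step $t$ before the stopping time $T$, the multigraph has $M_t > cn$ edges (by definition of $T$, which is in any case capped at $n-2$ since Karger's algorithm performs at most $n-2$ contractions). Moreover, conditioned on $\cC$ not yet being destroyed, exactly $c$ edges of the current multigraph (counted with multiplicity) cross $\cC$: contractions of non-$\cC$-edges may turn some crossings into multi-edges but never remove them. Thus the conditional probability of destroying $\cC$ at the $(t+1)$-st contraction is exactly $c/M_t < c/(cn) = 1/n$.

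To finish, I would couple the Karger process with a sequence of independent $\mathrm{Bernoulli}(1/n)$ variables $B_1,\dots,B_{n-2}$ so that on every ``active'' step $t$ (one on which the algorithm has not yet stopped and $\cC$ is still alive), the destruction of $\cC$ at step $t$ implies $B_t = 1$. Then the event that $\cC$ is destroyed by time $T$ is contained in $\{B_t = 1 \text{ for some } t \le n-2\}$, whose probability is $1 - (1 - 1/n)^{n-2} \le 1 - 1/e$; taking complements yields $\Pr[\cC\text{ survives}] \ge 1/e$.

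The main thing to verify carefully is the coupling, which must be built inductively by revealing $B_t$ alongside the Karger choices; this is routine because $c/M_t \le 1/n$ deterministically on active trajectories. An equivalent derivation without explicit coupling would apply the tower property iteratively to obtain $\Pr[\cC\text{ survives to }T] = \E\bigl[\prod_{t<T}(1 - c/M_t)\bigr]$ (valid because $T \le n-2$ is a bounded stopping time adapted to the natural filtration) and then use the deterministic lower bound $\prod_{t<T}(1 - c/M_t) \ge (1-1/n)^{n-2} \ge 1/e$.
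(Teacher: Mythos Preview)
Your proof is correct and follows essentially the same approach as the paper: fix a minimum cut, observe that before stopping the multigraph has more than $cn$ edges so each contraction destroys the cut with probability at most $1/n$, and bound over at most $n$ steps. The paper's version is terser (it simply states the per-step bound and multiplies, obtaining $(1-1/n)^n > 1/4$ for $n>2$), while you make the coupling/tower-property justification explicit and use the slightly sharper step count $n-2$ to get $1/e$; the underlying argument is the same.
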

We will prove \pref{lem:karger3} in \pref{sec:karg}.
Of course, we must verify that we can run $T$ steps of Karger's algorithm with $\tO(T)$ oracle queries:
\begin{proposition}\label{prop:koracle}
    Given oracle access to the cut values of $G$, we can run $T$ steps of Karger's algorithm using $\tO(T)$ queries.
\end{proposition}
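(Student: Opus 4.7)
The plan is to simulate Karger's algorithm on the contracted graph using only the cut oracle for the original graph $G$, maintaining along the way just two pieces of bookkeeping: a partition $\mathcal{P} = \{A_1, \ldots, A_k\}$ of $V$ into the current super-vertices, together with their degrees $d_i := c(A_i)$ in $G$. The key observation that makes everything go through is that for any collection $\mathcal{S} \subseteq \mathcal{P}$, the cut value of $\mathcal{S}$ in the contracted graph is exactly $c(\bigcup_{A \in \mathcal{S}} A)$ in $G$; hence a single query to $G$'s oracle answers any cut query on the contracted graph at no additional cost.

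To execute one step of Karger's algorithm I need to sample a uniformly random edge (with multiplicity) of the current contracted graph, and I will do this in two stages. First, sample a super-vertex $A_i$ with probability proportional to $d_i$; this is free since all the $d_i$'s are stored. Second, sample a uniformly random neighbor $A_j$ of $A_i$ in the contracted graph by running the lion-in-the-desert procedure of \pref{lem:1edge} on the contracted graph: partition the remaining super-vertices into two subsets $\mathcal{S}_1, \mathcal{S}_2$, count the contracted-graph edges from $A_i$ into $U_\ell := \bigcup_{A \in \mathcal{S}_\ell} A$ via the identity $e(A_i, U_\ell) = \tfrac{1}{2}(c(A_i) + c(U_\ell) - c(A_i \cup U_\ell))$ (a cost of $O(1)$ oracle queries per level), and recurse into $\mathcal{S}_1$ or $\mathcal{S}_2$ with probability proportional to these counts. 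The recursion has $O(\log n)$ levels, so this stage uses $O(\log n)$ queries per edge sample.

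Once an edge $\{A_i, A_j\}$ is sampled, performing the contraction itself amounts to replacing $A_i, A_j$ by their union in $\mathcal{P}$ and issuing the single cut query $c(A_i \cup A_j)$ to refresh the stored degree of the new super-vertex. Summed across $T$ steps, the algorithm pays $n$ queries up front to initialize the degree table $d_i = c(\{v_i\})$, then $O(\log n)$ queries per step for sampling plus $O(1)$ for the degree update, for a total of $O(n + T\log n) = \tO(n + T)$ queries. In the regime $T = \Omega(n)$, which is the regime in which the proposition is actually invoked (e.g.\ via \pref{lem:karger3}), this collapses to $\tO(T)$.

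I do not expect any real technical obstacle: the only thing to verify is that the two-stage sampler produces a uniform edge of the contracted graph, which follows from the calculation that the probability of drawing an unordered pair $\{A_i, A_j\}$ is $2 \cdot \tfrac{d_i}{2m'} \cdot \tfrac{e(A_i, A_j)}{d_i} = e(A_i, A_j)/m'$, where $m'$ is the current contracted-graph edge count, exactly the multiplicity-weighted uniform distribution Karger's algorithm requires. The proposition is essentially a bookkeeping argument on top of \pref{lem:1edge}.
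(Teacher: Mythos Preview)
Your proof is correct and follows essentially the same approach as the paper: simulate cut queries on the contracted graph by querying $G$ on the union of the relevant super-vertices, sample a uniform edge via the degree-proportional vertex pick followed by the binary search of \pref{lem:1edge}, and refresh the new super-vertex's degree with one extra query after each contraction. The only cosmetic difference is that the paper treats the $n$ initial degree queries as already performed (they occur in \pref{step:single} of \pref{alg:mincut}), whereas you fold them into the count and then argue they are absorbed by $\tO(T)$ in the relevant regime; either way the bound used downstream is $\tO(n)$.
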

\begin{proof}
    The key observation is that keeping track of super-vertices requires no additional queries---it is simply a matter of treating all vertices belonging to a super-vertex as a single entity.

    Each step of Karger's algorithm requires sampling a random edge, which we have already seen requires $O(\log n)$ oracle queries assuming the degree of every vertex is known.
    In order to keep track of the degree of super-vertices, we require only a single oracle query after every edge contraction: we ask for the cut value between the super-vertex and the remainder of the graph.
\end{proof}

At this point, our algorithm is as follows: we first run Karger's algorithm until the min cut size is comparable to the average degree, then subsample the graph to obtain a sparse graph that approximates the cuts of the original graph well.
Applying concentration arguments, it's easy to see that this algorithm immediately yields an approximate min cut.

However, the following observation allows us to improve upon this, and learn the minimum cut exactly!
Since the cuts in $G'$ approximate the cuts in $G$ well, any two nodes that are together in \emph{every} approximate minimum cut in $G'$ are safe to contract into a super-node (because they certainly aren't separated by the min cut).
After these contractions, if there are sufficiently few edges remaining between the super-vertices, we can learn the entire remaining graph between the super-vertices and find the true minimum cut.

The following structural result shows that this is indeed the case: the total number of edges that participate in non-singleton approximately-minimum cuts is at most $O(n)$.\footnote{A cut is non-singleton if each side has at least two nodes.}

\begin{lemma}[Covering approximate min cuts with $O(n)$ edges]\torestate{\label{lem:cover}
Let $G = (V,E)$ be an unweighted graph with minimum degree $d$ and minimum cut value $c$.
Let $\cC$ be the set of all non-singleton approximate-minimum cuts in the graph, with cut value at most $c+\epsilon d$, for $\epsilon < 1$.
    Then $|\cup_{C \in \cC} C|$ (the total number of edges that participate in cuts in $\cC$) is $O(n)$.}
\end{lemma}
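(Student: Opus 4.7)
The plan is a two-step structural argument: first, show that every non-singleton near-min cut has both sides of size $\Omega(d)$; then use this together with a submodular uncrossing / laminar family argument to bound the union of cut edges.

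\textbf{Step 1 (sides are large).} Fix $(S,T)\in\mathcal{C}$ with $|S|=k\ge 2$. Counting edge-endpoints in $S$ and using the minimum-degree bound,
\[
c(S) \;=\; \sum_{v\in S}\deg(v) - 2|E(G[S])| \;\ge\; kd - k(k-1).
\]
Combining with $c(S)\le c+\epsilon d\le (1+\epsilon)d$ (since $c\le d$), this forces $k^2 - k(d+1) + (1+\epsilon)d \ge 0$, and for $\epsilon<1$ the only admissible regime with $k\ge 2$ is $k=\Omega_\epsilon(d)$. By symmetry, $|T|=\Omega_\epsilon(d)$ as well.

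\textbf{Step 2 (extract a small laminar cover).} I would uncross the family $\mathcal{C}$ using submodularity: whenever $A,B\in\mathcal{C}$ cross, replace them with $(A\cap B,\,A\cup B)$ or $(A\setminus B,\,B\setminus A)$; the inequality $c(A\cap B)+c(A\cup B)\le c(A)+c(B)$ (and its symmetric-difference analogue) ensures that at least one ``corner'' has cut value $\le \max(c(A),c(B))\le c+\epsilon d$ and that the corners between them still witness the crossing edges. Iterating produces a laminar family $\mathcal{L}$ of non-singleton near-min cuts (with slightly inflated cut-value bound $c+O(\epsilon d)=O(d)$) whose boundaries collectively cover $E^*=\bigcup_{C\in\mathcal{C}} C$.

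\textbf{Step 3 (charging).} By Step~1, every member of $\mathcal{L}$ has sides of size $\Omega(d)$. Charging each vertex ``newly introduced'' at a node of the laminar tree to its $\Omega(d)$-sized exclusive slice, one wants to conclude $|\mathcal{L}|=O(n/d)$. Combined with $c(L)=O(d)$ for each $L\in\mathcal{L}$, this gives
\[
|E^*| \;\le\; \sum_{L\in\mathcal{L}} c(L) \;\le\; O(n/d)\cdot O(d) \;=\; O(n).
\]

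\textbf{Main obstacle.} The technically delicate step is turning Step~1 into the $O(n/d)$ bound of Step~3. A laminar chain whose members all have size in $[\Omega(d),\,n-\Omega(d)]$ could a priori have length $\Theta(n)$ (by growing one vertex at a time), which would give only the too-weak $|E^*|=O(nd)$. The right fix should exploit that two ``close'' non-singleton near-min cuts (differing in only a few vertices) cover essentially the same edges, so that chains may be thinned without loss of coverage of $E^*$; equivalently, one should work with the \emph{extreme} non-singleton near-min cuts and show that consecutive laminar levels differ by an $\Omega(d)$-sized vertex chunk. Setting up this thinning/charging carefully, while controlling the value-inflation from uncrossing, is where the real work of the proof lies.
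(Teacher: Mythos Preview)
Your Step 1 is correct and is the right structural fact to start from. But the proposal has two genuine gaps, only one of which you flag.

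The gap you name is real and fatal as stated: nothing prevents a laminar family of non-singleton near-min cuts from containing a chain of length $\Theta(n)$ (grow one vertex at a time while both sides stay $\Omega(d)$), and your ``thinning'' suggestion is a hope, not an argument. There is also a second gap in Step~2 that you do not name: replacing crossing $A,B$ by $A\cap B,\,A\cup B$ drops every edge running between $A\setminus B$ and $B\setminus A$, so the laminar family $\mathcal{L}$ you build need not cover $E^*$ at all; if you keep the other corner pair $A\setminus B,\,B\setminus A$ for coverage, you lose control of one cut value and may create singletons. Your claim that ``the corners between them still witness the crossing edges'' is simply false for the union/intersection corners.

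The paper avoids both problems with a different, direct argument. Instead of uncrossing, it \emph{greedily} selects a subfamily $K\subseteq\mathcal{C}$: starting from the trivial partition $\{V\}$, add $C\in\mathcal{C}$ to $K$ only if it refines some current block into two pieces each of size $\ge\beta d$. This rule immediately gives $|K|\le n/(\beta d)$ (each addition is charged to a new $\beta d$-sized block), so the $K$-edges contribute $O(d)\cdot O(n/d)=O(n)$. The crucial point is that the paper does \emph{not} try to cover the remaining edges of $E^*$ with more cuts. Instead it shows that any edge $(u,v)\in C\in\mathcal{C}$ that is not already a $K$-edge must be incident to a ``small'' vertex --- one whose block would shrink below $\beta d$ if $C$ were added --- and that every small vertex $v$ already has an $\alpha$-fraction of its degree among the $K$-edges (otherwise, since $v$ has fewer than $\beta d$ neighbors on its own side of $C$ and fewer than $\alpha\deg(v)$ edges in $K$, moving $v$ across $C$ would drop the cut below $c$, a contradiction). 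Hence the set $S$ of small vertices has total degree $O(|K|\cdot d)=O(n)$, and every edge of $E^*$ is either a $K$-edge or incident to $S$. This vertex-volume step is precisely the device that replaces your missing chain-thinning argument, and it never requires the covering family to be laminar.
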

We remark that a similar claim is proven in \cite{KT15-deterministic_min_cut}.
While the theorem of \cite{KT15-deterministic_min_cut} would be sufficient for our purposes,\footnote{Their result is stronger in the sense that they also show how to locate the cover in deterministic time $O(m)$, while our result is slightly shaper since we only require $O(n)$ edges rather than $\tO(n)$.}
 our proof is extremely simple, and so we include it in \pref{sec:coverlem}.

This concludes our global min-cut algorithm.
Below, we summarize the algorithm, and formally prove that it is correct.

\begin{algorithm}[Global Min Cut with $\tO(n)$ oracle queries]
    \label{alg:mincut}{\color{white}.}\\
    {\bf Input:} Oracle access to the cut values of an unweighted simple graph $G$.
    \begin{compactenum}
	\item Compute all of the single-vertex cuts.\label{step:single}
	\item For $c = 2^j$ for $j = 0,1,\ldots,\log n$,\label{step:guess}
    \begin{compactenum}
	\item Repeat $\log n$ times:\label{step:samploop}
	    \begin{compactenum}
		\item Run Karger's Algorithm until there are a total of at most $cn$ edges between the components in the graph, call the resulting graph $G_1$.\label{step:karg}
		\item Starting with $G_1$, subsample each edge with probability $p = \frac{80\ln n}{\epsilon^2 c}$ to obtain a graph $G_2$ (any $\epsilon \in (0,1/3)$ suffices) \label{step:subsample}
		\item Find all non-singleton cuts of size at most $(1+3\epsilon)pc$ in $G_2$, and contract any two nodes which are together in all such cuts, call the resulting graph $G_3$\label{step:contract}
		\item Learn all of $G$'s edges between the super-vertices of $G_3$ to obtain $G_4$ (unless there are more than $n\log n$ edges, in which case abort and return to \pref{step:samploop}).\label{step:learnall}
	\item Compute the minimum cut in $G_4$, and if it is the best seen so far, keep track of it.\label{step:comp}
	    \end{compactenum}
    \end{compactenum}
    \end{compactenum}
    {\bf Output:} Return the best cut seen over the course of the algorithm.
\end{algorithm}

\begin{theorem}[Mincut]
    \pref{alg:mincut} uses $\tO(n)$ queries and finds the exact minimum cut in $G$ with high probability.
\end{theorem}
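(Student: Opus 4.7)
The plan is to bound queries and correctness separately, letting the outer loops boost the success probability of the right guess of $c$ to high probability.

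\paragraph{Query complexity}
Step~\ref{step:single} takes $n$ queries. For each of the $O(\log n)$ guesses and $O(\log n)$ repetitions, I bound the work inside the inner loop by $\tO(n)$. In Step~\ref{step:karg}, Karger's Algorithm performs at most $n-2$ contractions in total, so by \pref{prop:koracle} this costs $\tO(n)$ queries. The subsampling of Step~\ref{step:subsample} produces a graph $G_2$ with expected edge count $p\cdot\abs{E(G_1)} \le p\cdot cn = O(n\log n/\epsilon^2)$, and using the degree information already cached in Step~\ref{step:single} (updated cheaply under contractions, as in \pref{prop:koracle}) each sampled edge can be realized with $O(\log n)$ queries by \pref{lem:1edge}, for a total of $\tO(n)$. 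Step~\ref{step:contract} is performed on the cached graph $G_2$ and costs no queries. Step~\ref{step:learnall} explicitly aborts when the residual edge count between super-vertices exceeds $n\log n$, so it uses at most $\tO(n)$ queries by \pref{lem:1edge}. Summing the $O(\log^2 n)$ outer iterations yields $\tO(n)$ queries overall.

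\paragraph{Correctness}
Let $c^\star$ denote the true global minimum cut value. If the optimum is a singleton, Step~\ref{step:single} already sees it, so assume the optimum is non-singleton. Consider the (unique) guess $c$ with $c^\star \le c \le 2c^\star$. In Step~\ref{step:karg}, \pref{lem:karger3} (applied with parameter $c^\star$ and noting that stopping when $\abs{E(G_1)} \le cn \le 2c^\star n$ is no later than the stopping condition of the lemma) guarantees that with constant probability the min-cut of $G$ survives as a cut of $G_1$ whose value remains $c^\star$. Conditioning on this event, I then apply Benczúr--Karger concentration: since $G_1$ has min-cut value at most $c^\star \le c$ and we subsample at rate $p = 80\ln n/(\epsilon^2 c)$, the rescaled sampled graph $G_2/p$ approximates every cut of $G_1$ to within a $(1\pm\epsilon)$ factor with high probability, and in particular every cut of $G_1$ of value at most $(1+\epsilon)c^\star$ corresponds to a cut of $G_2$ of value at most $(1+3\epsilon)pc$ and vice versa.

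Hence the family of non-singleton cuts of $G_2$ of value at most $(1+3\epsilon)pc$ that we enumerate in Step~\ref{step:contract} is precisely the family of non-singleton cuts of $G_1$ of value at most roughly $(1+O(\epsilon))c^\star$, lifted to $G_2$. In particular, the true minimum cut of $G$ (which is non-singleton by assumption, and survives in $G_1$) appears in this family. So any pair of vertices that the algorithm contracts lies on the same side of \emph{every} approximately-minimum non-singleton cut of $G_1$ and is therefore safe to merge. After Step~\ref{step:contract}, the surviving edges between super-vertices are exactly those that cross at least one non-singleton approximate minimum cut of $G_1$, which by \pref{lem:cover} (applied to $G_1$ with $d \ge c^\star$ and $\epsilon$ small) is at most $O(n)$. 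In particular the abort condition in Step~\ref{step:learnall} does not trigger, so we learn $G_4$ exactly and Step~\ref{step:comp} computes its min cut, which equals the min cut of $G$ because every edge crossing the optimum is preserved.

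\paragraph{Putting it together}
Each iteration with the correct guess succeeds with a constant probability $\Omega(1)$ (from \pref{lem:karger3}) times $1-o(1)$ (from the Benczúr--Karger concentration and \pref{lem:cover}). The $\log n$ repetitions in Step~\ref{step:samploop} drive the failure probability for the correct guess to $n^{-\Omega(1)}$. Tracking the best cut across all iterations and guesses then yields the true minimum with high probability. The main technical obstacle is the second paragraph: carefully choosing $\epsilon$ so that (i) Benczúr--Karger preserves the correspondence between approximate minimum cuts of $G_1$ and cuts of value at most $(1+3\epsilon)pc$ of $G_2$ in \emph{both} directions, and (ii) \pref{lem:cover} still bounds the resulting family by $O(n)$ edges.
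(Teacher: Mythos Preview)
Your proposal is correct and follows essentially the same approach as the paper's own proof: split into query complexity and correctness, invoke \pref{prop:koracle} and \pref{lem:1edge} for the query bound, and for correctness combine \pref{lem:karger3} (survival of the min cut under partial Karger), the uniform-sampling concentration (\pref{lem:sample}/\pref{cor:sample}, which you call ``Bencz\'ur--Karger concentration''), and \pref{lem:cover} to ensure the abort in Step~\ref{step:learnall} does not trigger on the correct guess. If anything, you are slightly more explicit than the paper in arguing why \pref{lem:cover} guarantees the residual graph has $O(n)$ edges and hence why the abort condition is never hit in the good iteration.
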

\begin{proof}
First, we will prove the correctness of the algorithm.
    Clearly, if one of the single-vertex cuts is the minimum cut, the algorithm finds this cut in \pref{step:single}, so suppose that the best cut has value $\hat c < d_{\min}$, where $d_{\min}$ is the minimum degree in $G$.

In one of the iterations of \pref{step:guess}, $c$ is within a factor of $2$ of $\hat c$, and we focus on this iteration.
    In \pref{step:karg}, by \pref{lem:karger3}, the minimum cut survives with at least constant probability.
    By the concentration arguments given in \pref{lem:sample} and \pref{cor:sample}, in \pref{step:subsample} every cut in $G_2$ is close to the value of the cut in $G_1$ with high probability,\footnote{where ``close'' means that the value of the cut in $G_2$ is within $(1\pm \varepsilon)pk$, where $k$ is the value of the cut in $G_1$.} and so no edge in the minimum cut is contracted in \pref{step:contract}.
    Therefore, with constant probability, we find the minimum cut in \pref{step:comp}.
    Since we repeat this process $\log n$ times in \pref{step:samploop}, the total probability that we miss the global min cut in every iteration is polynomially small. This proves the corectness of the algorithm.

    Now, we argue that at most $\tO(n)$ queries are required.
    At every iteration of the inner loop, we run Karger's Algorithm for less than $n$ steps ($\tO(n)$ queries by \pref{prop:koracle}). 
    Then, we subsample each of $cn$ edges each with probability $O(\frac{\ln n}{c})$; with high probability this is equivalent to sampling $\tO(n)$ random edges ($\tO(n)$ queries by \pref{cor:sample}).
    Step \ref{step:contract} does not require any queries.
    By \pref{lem:cover}, the true minimum cut hast only $O(n)$ edges.
    Therefore \pref{step:learnall} requires learning only $O(n)$ edges ($\tO(n)$ queries) if $c$ is the true value of the minimum cut, and otherwise the step is aborted.
    Finally, \pref{step:comp} requires no additional queries.
    Since the inner loop is repeated a total of $\log^2 n$ times, this concludes the proof.
\end{proof}

In the following subsections, we provide proofs of key intermediate lemmas.


\subsection{Compressing the graph with Karger's algorithm}
\label{sec:karg}

\restatelemma{lem:karger3}

\begin{proof}
    Fix a specific min cut $C$. We apply Karger's algorithm until the total number of edges drops to $cn$.
    At each step, the probability that we contract an edge from $C$ is at most $1/n$, and we have at most $n$ steps, so the probability that $C$ survives is at least $(1-1/n)^n > 1/4$ for $n > 2$.
\end{proof}


\subsection{Subsampling the graph}
First, we show that if we sample with probability proportional to $\tO(1/c)$, every cut in the subsampled graph has value close to its expectation. Because there are $2^n$ cuts, a simple Chernoff bound followed by a union bound is insufficient. Instead, we perform a more careful union bound by appealing to a polynomial bound on the number of approximately minimum cuts (as is standard in this setting, see e.g. \cite{Karger99-skeletons}).

\begin{lemma}\label{lem:sample}
    Let $G = (V,E)$ be a multigraph with minimum cut value $c$, and let $G' = (V,E')$ be the result of sampling each edge of $E$ with probability $p \ge \min\left(\frac{40\ln n}{\epsilon^2 c}, 1\right)$.
    Then with high probability, every cut of value $k$ in $G$ has value $(1\pm \epsilon)pk$ in $G'$.
\end{lemma}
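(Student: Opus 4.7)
The plan is to follow the standard Karger-style union bound argument: control the number of cuts of each size using Karger's cut-counting bound, and bound the failure probability for each shell via a multiplicative Chernoff inequality. The challenge is that naively there are $2^n$ cuts, so a plain union bound fails; we need to exploit that cuts of value $\alpha c$ are polynomially (not exponentially) many.

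First I would fix a single cut $C$ of value $k$ in $G$ and apply the standard multiplicative Chernoff bound to the sum of the $k$ independent Bernoulli($p$) indicators of its edges surviving. This gives
\[
\Pr\bigl[\,|\,|C \cap E'| - pk\,| > \epsilon pk \,\bigr] \;\le\; 2\exp\!\left(-\tfrac{\epsilon^2 pk}{3}\right).
\]
If $p = 1$ there is nothing to prove, so assume $p = 40\ln n/(\epsilon^2 c)$; then for a cut of value $k = \alpha c$ the failure probability is at most $2\exp(-\tfrac{40\alpha \ln n}{3}) = 2n^{-40\alpha/3}$.

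Next I would invoke Karger's cut-counting bound: in any multigraph with min-cut $c$, the number of cuts of value at most $\alpha c$ is at most $n^{2\alpha}$ (for $\alpha \ge 1$, this follows from the contraction algorithm analysis; for half-integer $\alpha$ this is tight). Partition the cuts into geometric shells $\mathcal{C}_i = \{C : 2^{i-1} c < \mathrm{val}(C) \le 2^i c\}$ for $i = 0,1,2,\ldots$. By Karger's bound, $|\mathcal{C}_i| \le n^{2 \cdot 2^i}$. Every cut in $\mathcal{C}_i$ has value at least $2^{i-1} c$, so each fails with probability at most $2n^{-40 \cdot 2^{i-1}/3} = 2 n^{-20 \cdot 2^i /3}$. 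Union bounding within $\mathcal{C}_i$ gives failure probability at most $2 n^{2\cdot 2^i - 20\cdot 2^i/3} = 2n^{-14\cdot 2^i/3}$. Summing this over $i \ge 0$ is dominated by the $i=0$ term and gives total failure probability at most $n^{-\Omega(1)}$, as desired.

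The only subtle point, and the one I would be careful about, is Karger's cut-counting bound for \emph{non-integer} $\alpha$ and for multigraphs: the standard statement is that the number of cuts of value at most $\alpha c$ is $O(n^{2\alpha})$, which is exactly what we need and holds for multigraphs since the contraction-algorithm analysis is insensitive to multiplicities. Once this is in hand the rest is routine bookkeeping of the geometric series. The lemma then follows by taking the union over all shells.
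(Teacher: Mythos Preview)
Your proof is correct and follows essentially the same approach as the paper: a Chernoff bound on each cut, Karger's cut-counting lemma to bound the number of cuts in each size range, and a union bound over shells. The only cosmetic differences are that the paper uses arithmetic shells $[\ell c,(\ell+1)c]$ rather than your geometric ones, and slightly different constants in the Chernoff and Karger bounds; neither affects the argument.
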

\begin{proof}
For each edge $e \in E$, consider the random binary variable
$X_e \triangleq \begin{cases} 1 & e \in E' \\ 0 & \text{otherwise} \end{cases}$.
    Notice that $\E(X_e) = p$.  Let $C$ be a cut of size $k$.
    By a Chernoff bound, the probability that $C$ has cut value deviating from its expectation by more than an $\epsilon$-factor in $G'$ is bounded by:
\begin{gather}\label{eq:bennett1}
\Pr \left[ \left|\sum_{e\in C^*} X_e - pk\right| > p\epsilon k \right]
    \le 2\exp\left(-\frac{\epsilon^2 pk}{2}\right)
    \le 2 n^{-10k/c},
\end{gather}
where the last inequality follows by our choice of $p$.

Now, it follows from the analysis of Karger's algorithm (\pref{lem:karger} below) that for every integer $\ell > 0$ there are at most $(2n)^{2\ell}$ cuts of value at most $\ell c$.
    Consider a cut $C$ with value in $[\ell c, (\ell+1) c]$ in $G$.
    Using \pref{eq:bennett1}, we have that the probability that its value in $G'$ deviates from expectation by more than $\pm p(\ell + 1) \epsilon c$ is at most $n^{-10 \ell}$.
    Taking a union bound over all such $C$ and all values of $\ell$ the soundness holds with probability at least $1-n^{-6}$.
\end{proof}

The following lemma, which we employed in order to bound the number of cuts of each size, is an oft-used consequence of Karger's algorithm (see e.g. \cite{KS96-Karger_Stein}).
\begin{lemma}[Bound on the number of small cuts]\label{lem:karger}
If a graph on $n$ vertices has a minimum cut of size $c$, then there are at most $(2n)^{2\ell}$ cuts of size $\ell c$.
\end{lemma}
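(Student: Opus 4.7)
The plan is to follow the standard Karger-style contraction argument: run the random edge-contraction process partway and lower-bound the probability that any fixed small cut survives as a valid cut of the contracted multigraph. Since every surviving cut corresponds to a distinct partition of the surviving super-vertices, an expectation count then upper-bounds the number of small cuts.

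First I would fix an integer $\ell \ge 1$ and a target cut $C$ of size $|C| \le \ell c$, and run Karger's random contraction algorithm until exactly $2\ell$ super-vertices remain (so through $n - 2\ell$ contractions). The key invariant I would use at each step is that contractions never decrease the minimum cut, so at the stage when $k$ super-vertices remain each super-vertex still has degree at least $c$, and hence the total edge count is at least $kc/2$. The probability of destroying $C$ at that step is therefore at most $|C|/(kc/2) \le 2\ell/k$, and so the survival probability over all $n - 2\ell$ steps is at least
\[
\prod_{k=2\ell+1}^{n} \Paren{1 - \frac{2\ell}{k}} = \prod_{k=2\ell+1}^{n} \frac{k-2\ell}{k} = \frac{1}{\binom{n}{2\ell}}.
\]

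Next, I would observe that once only $2\ell$ super-vertices remain, every cut that has survived intact in the original graph corresponds to a distinct non-trivial bipartition of these $2\ell$ super-vertices, and the number of such bipartitions is at most $2^{2\ell}$. Let $N_\ell$ denote the number of cuts of $G$ of size at most $\ell c$. By linearity of expectation, the expected number of such cuts that survive as distinct cuts in the contracted graph is at least $N_\ell / \binom{n}{2\ell}$, and it is at most $2^{2\ell}$. Rearranging gives $N_\ell \le 2^{2\ell} \binom{n}{2\ell} \le (2n)^{2\ell}$, which is the claimed bound.

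I do not expect any serious obstacle: the only subtle point is verifying that distinct cuts of $G$ that survive remain distinct as cuts of the contracted multigraph (this follows because survival means no edge crossing the cut was ever contracted, so the bipartition of super-vertices it induces is well-defined and uniquely recovers the original cut) and that the minimum cut value cannot decrease under contraction (any cut of the contracted graph pulls back to a cut of $G$ of the same size). Both are standard and I would state them briefly rather than elaborating.
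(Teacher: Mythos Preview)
Your proposal is correct and follows essentially the same contraction argument as the paper: lower-bound the survival probability of a fixed cut through $n-2\ell$ random contractions using the degree invariant, then cap the number of surviving cuts by the $2^{2\ell}$ bipartitions of the remaining super-vertices. The only cosmetic difference is that you phrase the final step via linearity of expectation and get the slightly sharper intermediate bound $1/\binom{n}{2\ell}$, whereas the paper uses the looser $n^{-2\ell}$; both yield $(2n)^{2\ell}$.
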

\begin{proof}
Fix a specific cut $C$, such that $|C| = \ell c$.
Consider Karger's algorithm, in which we contract a uniformly random edge in each step.
    After $t$ steps, there are at least $(n-t)c/2$ edges in the graph (since no vertex can ever have degree less than $c$ in Karger's algorithm).
    Then in the $t$-th step of Karger's algorithm there is probability at most $\frac{2\ell c}{(n-t)c}$ that an edge from $C$ is contracted.
    Using a telescoping product argument, the probability that $C$ survives for $n-2\ell$ steps of the algorithm is at least $\prod_{t=0}^{n-2\ell} \left(1-\frac{2\ell}{n-t} \right) = \frac{(2\ell)!}{n(n-1)\cdots (n-2\ell+1)} \ge n^{-2\ell}$.
    After $n-2\ell$ steps, there are $2\ell$ vertices remaining, so less than $2^{2\ell}$ cuts survived.
    Therefore in total there can only be $(2n)^{2\ell}$ such cuts in the original graph.
\end{proof}

As a corollary of \pref{lem:sample}, we have that the approximately minimum cuts of the subsampled graph correspond to approximately minimum cuts in the original graph:

\begin{corollary}\label{cor:sample}
    Let $G = (V,E)$ be a graph with minimum minimum cut value $c$.
    Let $G' = (V,E')$ be the result of sampling each edge in $E$ with probability $p = \min\left(\frac{40 \ln n}{\epsilon^2 c},1\right)$, with $\eps \le 1/3$.
Then the following events occur with high probability:
\begin{description}
    \item[Completeness] the minimum cut of $G$ has value at most $p(1+\epsilon)c$ in $G'$.
    \item[Soundness] every cut of value at most  $p(1+\epsilon)c$ in $G'$ has value at most  $(1+3\epsilon)c$ in $G$.
	Furthermore, no cut has value less than $p(1-\epsilon)c$ in $G'$.
\end{description}
\end{corollary}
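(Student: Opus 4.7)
The plan is to derive both the completeness and soundness claims as direct consequences of \pref{lem:sample}, which guarantees that with high probability every cut of value $k$ in $G$ has value in the range $(1\pm\epsilon)pk$ in $G'$. I will condition on this high-probability event throughout.

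For completeness, I would apply \pref{lem:sample} to the fixed minimum cut of $G$, which has value $k = c$. The lemma's upper bound gives that this cut has value at most $(1+\epsilon)pc$ in $G'$, so the minimum cut of $G'$ is at most this value, which is exactly what is claimed.

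For the ``furthermore'' part of soundness, I would invoke the lower-bound half of \pref{lem:sample}: for any cut $C$ with value $k$ in $G$, we have $k \ge c$ (since $c$ is the min cut), so its value in $G'$ is at least $(1-\epsilon)pk \ge (1-\epsilon)pc$. Hence no cut in $G'$ has value below $(1-\epsilon)pc$.

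For the main soundness statement, let $C$ be any cut with value at most $p(1+\epsilon)c$ in $G'$, and let $k$ be its value in $G$. Combining the lower bound from \pref{lem:sample} with the hypothesis gives $(1-\epsilon)pk \le p(1+\epsilon)c$, hence $k \le \frac{1+\epsilon}{1-\epsilon}\, c$. It then suffices to check the elementary inequality $\frac{1+\epsilon}{1-\epsilon} \le 1+3\epsilon$ for $\epsilon \le 1/3$, which rearranges to $3\epsilon^2 \le \epsilon$; this is the only small calculation in the proof and poses no real obstacle. Since the underlying event from \pref{lem:sample} holds with probability $1 - n^{-\Omega(1)}$, the corollary follows.
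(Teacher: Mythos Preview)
Your proposal is correct and matches the paper's approach: the paper's proof is literally a one-line appeal to \pref{lem:sample}, noting that every cut concentrates to within a $(1\pm\epsilon)$ factor, and you have simply unpacked the details (including the $\frac{1+\epsilon}{1-\epsilon}\le 1+3\epsilon$ calculation, which the paper carries out elsewhere in \pref{sec:minc2}).
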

\begin{proof}
    This follows immediately from \pref{lem:sample}, because with high probability, every cut concentrates to within a $(1\pm \epsilon)$ factor of its expectation.
\end{proof}


\subsection{Covering approximate min cuts with $O(n)$ edges}\label{sec:coverlem}

\restatelemma{lem:cover}

\begin{proof}
For any $\epsilon > 0$, let $0 < \alpha + \beta < \frac{1}{2}(1-\epsilon)$.

Notice that any subset of $\cC$ induces a partition over $V$, where two vertices are in the same component if they are on the same side of every cut.
Let $K$ be a subset of $\cC$ chosen as follows: starting from $K$ empty, while there exists a cut $C \in \cC$ such that adding $C$ to $K$ splits at least one component into two components each of size $\ge \beta d$, add $C$ to $K$.
Suppose that at termination, $K$ contains $k$ cuts $K = \{C_1,\ldots,C_k\}$.
We claim that
\begin{claim}
The union of cuts in $K$ contains at most $(c + \eps d)k$ edges, $|\cup_{i \in [k]} C_i| \le (c+\eps d)k$.
\end{claim}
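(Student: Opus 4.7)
The claim is really just a union bound, so the plan is a one-liner. By construction, $K \subseteq \cC$, so every cut $C_i \in K$ has at most $c + \epsilon d$ edges. Therefore
\[
\Bigl|\bigcup_{i \in [k]} C_i \Bigr| \;\le\; \sum_{i=1}^{k} |C_i| \;\le\; k(c + \epsilon d),
\]
which is exactly the claim. No structural properties of how $K$ was selected (the greedy rule, the component-size threshold $\beta d$, etc.) are needed for this bound; it is the crude estimate that completely ignores intersections among the $C_i$.

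The reason I think this is actually how the authors intend it to go, rather than a cleverer overlap-aware count, is that the next step of the proof of \pref{lem:cover} will clearly have to do the heavy lifting of bounding $k$ itself. That is where the hypotheses of the lemma (minimum degree $\ge d$, and every cut in $\cC$ non-singleton) and the stopping rule for $K$ (every added cut splits a component into two pieces of size $\ge \beta d$) must come in. A charging argument along the lines of Bencz\'ur–Karger should show that each $C_i$ is ``responsible'' for at least $\Omega(d)$ edges within the newly split component (since that component has size $\ge \beta d$ and minimum degree $\ge d$ within $G$), and an uncrossing/posimodularity argument together with $\alpha + \beta < \frac12(1-\epsilon)$ should prevent these contributions from being double-counted. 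This would yield $k = O(n/d)$; combined with the present claim and $c \le d$, this produces the desired $|\bigcup_{C \in \cC} C| = O(n)$ bound (after a separate step to argue that cuts in $\cC \setminus K$ do not introduce new edges outside $\bigcup_{i} C_i$, using the maximality of $K$).

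The hard part of the enclosing proof will thus be the component-splitting charging, not the present claim; for the present claim there is no obstacle and nothing beyond a triangle inequality on cardinalities to write down.
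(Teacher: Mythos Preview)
Your proof of the claim is correct and is exactly the paper's argument: each $C_i \in K \subseteq \cC$ has at most $c+\epsilon d$ edges, so the union bound gives $|\bigcup_i C_i| \le k(c+\epsilon d)$. (Your side-speculation about the remainder of the proof of \pref{lem:cover} is a bit off---the paper bounds $k$ by a direct induction on component sizes rather than edge-charging or uncrossing/posimodularity, and handles edges in $\cC \setminus K$ via a set $S$ of vertices with many $K$-incident edges---but none of that bears on the present claim.)
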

\begin{proof}
Each of the $k$ cuts is an approximate min cut and contains at most $c + \eps d$ edges.
\end{proof}
\begin{claim}
The number of cuts in $K$ is at most $k \le \frac{n}{\beta d} - 1$.
\end{claim}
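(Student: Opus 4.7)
Plan: I would prove the bound by inductively constructing, for each $i \in \{0,1,\dots,k\}$, a family $\cB_i$ of pairwise disjoint subsets of $V$, each of size at least $\beta d$, with $|\cB_i| \ge i+1$. Applied at $i = k$, disjointness together with the size lower bound yields $(k+1)\beta d \le |\bigcup_{B \in \cB_k} B| \le n$, which rearranges to the desired $k \le n/(\beta d) - 1$.

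For the base case, take $\cB_0 = \{V\}$ (if $n < \beta d$ then $K$ must be empty and the bound is vacuous). For the inductive step, recall that when $C_i$ was added to $K$ there is a designated component $A_i^*$ of the partition $P_{i-1}$ induced by $C_1, \dots, C_{i-1}$ that $C_i$ splits into two parts $A_{i,1}^*, A_{i,2}^*$, each of size at least $\beta d$. The goal is to relate $A_i^*$ to the previously built collection $\cB_{i-1}$.

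The key structural observation, which I expect to be the main thing to verify carefully, is the following: every $B \in \cB_{i-1}$ was inserted at some earlier step $j < i$ as a witness-piece $A_{j,\ell}^*$, so by construction all its vertices agree on which side of each of $C_1,\dots,C_j$ they lie. Unwinding the fact that $P_{i-1}$ is the common refinement of $C_1,\dots,C_{i-1}$, this implies that $B$ is automatically a union of components of $P_{i-1}$. Consequently, the single component $A_i^*$ of $P_{i-1}$ is, for every $B \in \cB_{i-1}$, either entirely contained in $B$ or entirely disjoint from $B$.

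With this dichotomy in hand the induction closes in two cases. If $A_i^* \subseteq B_0$ for some (necessarily unique) $B_0 \in \cB_{i-1}$, set $\cB_i = (\cB_{i-1} \setminus \{B_0\}) \cup \{A_{i,1}^*, A_{i,2}^*\}$; the two new sets lie inside $B_0$, so they remain disjoint from the other elements of $\cB_{i-1}$, each has size at least $\beta d$, and $|\cB_i| = |\cB_{i-1}| + 1$. Otherwise $A_i^*$, and hence both $A_{i,1}^*, A_{i,2}^*$, are disjoint from every element of $\cB_{i-1}$, and I simply set $\cB_i = \cB_{i-1} \cup \{A_{i,1}^*, A_{i,2}^*\}$, giving $|\cB_i| = |\cB_{i-1}| + 2$. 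In either branch $|\cB_i| \ge i+1$, which closes the induction and yields the claim.
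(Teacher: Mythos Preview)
Your packing argument is correct and gives a clean alternative to the paper's proof. One small imprecision worth fixing: saying only that the vertices of $B = A_{j,\ell}^*$ ``agree on which side of each of $C_1,\dots,C_j$ they lie'' shows merely that $B$ is contained in a single $P_j$-component, which is not by itself enough to conclude that $B$ is a union of $P_{i-1}$-components. What you need (and what is true) is that $B$ equals an entire $P_j$-component: indeed $A_j^*$ is a $P_{j-1}$-component, and $A_{j,\ell}^*$ is its intersection with one side of $C_j$, hence exactly a block of $P_j$. With that sharpened statement, the dichotomy ``$A_i^* \subseteq B$ or $A_i^* \cap B = \emptyset$'' follows immediately from $P_{i-1}$ refining $P_j$, and your two cases close the induction. (Also note $V \in \cB_0$ is not literally a witness-piece, but it is the unique $P_0$-component, so the same reasoning applies.)

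The paper takes a different route. Rather than exhibiting $k+1$ disjoint sets of size $\ge \beta d$, it runs a charging argument: it defines a component to be ``charged'' each time one of its descendants is split into two large pieces, and proves by induction on $|X|$ that any set $X$ is charged at most $|X|/(\beta d) - 1$ times, then applies this with $X = V$. Your approach is more constructive---it explicitly tracks the witnesses to the packing---whereas the paper's recursive charging avoids having to maintain and reason about the interaction between the evolving family $\cB_i$ and the partition $P_{i-1}$. Both arguments encode the same combinatorial fact: each cut added to $K$ contributes one new disjoint block of size at least $\beta d$.
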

\begin{proof}
We say that a component of the partition $K$ is {\em directly charged} at time $t$ if adding $C_t$ splits component into two components of size $\ge\frac{n}{\beta d}$. 
We say that a component of the partition at time $t$ is {\em indirectly charged} at time $t' > t$ if one of its descendent components is directly charged at time $t'$.

We prove by induction that no set of size $x$ is charged more than $\frac{x}{\beta d} -1 $ times.
For the base case, we take $x = 2\beta d$. 
A component of this size can be charged at most once, because it can be charged directly if it is split into two sets of size exactly $\beta d$, but never charged again.
Now, suppose that the claim holds for any $y<x$, and consider a set $X$ of size $x$.
The first time $X$ is split, it becomes refined into two sets $U,W$ of sizes $u$ and $w$.
The set $X$ may be charged during this split.
By induction, $U$ may be charged at most $\frac{u}{\beta d}-1$ times, and $W$ may be charged at most $\frac{w}{\beta d}-1$ times.
So $X$ can be charged at most $(\frac{w}{\beta d} -1) + (\frac{u}{\beta d}-1) + 1 = \frac{x}{\beta d} -1$ times, as desired.

To finish the proof, we note that the set of all nodes is charged at most $\frac{n}{\beta d}-1$ times, and every cut $C_i$ causes the set of all nodes to be charged.
\end{proof}

Now, define $S$ to be the set of vertices $v \in V$ with at least $\alpha \deg(v)$ incident edges in $K$.
We claim that $S$ cannot be too large:
\begin{claim}
The volume of $S$ is bounded, $\sum_{v \in S} \deg(v) \le \frac{4dk}{\alpha}$.
\end{claim}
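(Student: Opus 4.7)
The plan is to establish this bound by a double-counting argument on edge-endpoint incidences in $\cup_{i \in [k]} C_i$, counting once from the side of the cuts in $K$ and once from the side of the vertices in $S$.

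First I would bound the total size of the union $\cup_{i \in [k]} C_i$. Each $C_i$ is an approximate min cut in $\cC$, so $|C_i| \le c + \epsilon d$. Since the minimum cut value is at most the minimum degree, $c \le d$, and since $\epsilon < 1$, each $|C_i| \le 2d$. Summing over all $k$ cuts in $K$ gives the trivial bound $|\cup_{i \in [k]} C_i| \le \sum_{i \in [k]} |C_i| \le 2dk$. Each edge contributes exactly two endpoints, so
$$\sum_{v \in V} \bigl|\{e \in \cup_{i} C_i \;:\; v \in e\}\bigr| \;=\; 2 \cdot |\cup_{i} C_i| \;\le\; 4dk.$$

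Next I would use the defining property of $S$. By construction, every $v \in S$ has at least $\alpha \deg(v)$ incident edges lying in $K$, i.e.\ in $\cup_{i} C_i$. Therefore restricting the above sum to vertices in $S$ yields
$$\alpha \sum_{v \in S} \deg(v) \;\le\; \sum_{v \in S} \bigl|\{e \in \cup_{i} C_i \;:\; v \in e\}\bigr| \;\le\; 4dk,$$
and dividing through by $\alpha$ gives the desired bound $\sum_{v \in S} \deg(v) \le 4dk/\alpha$.

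There is no real obstacle here; the only subtlety is making sure that the inequality $c \le d$ (which is used to convert $c + \epsilon d$ into something linear in $d$) is invoked cleanly, since the parameters of the lemma are stated in terms of both $c$ and $d$. Everything else is a direct application of the fact that each of the $k$ cuts in $K$ is an approximate minimum cut, combined with the definition of $S$.
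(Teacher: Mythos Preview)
Your proof is correct and follows essentially the same approach as the paper: bound the number of edges in $K$ by $(c+\epsilon d)k \le 2dk$ (using $c\le d$ and $\epsilon<1$), count edge--vertex incidences to get $4dk$, and then apply the definition of $S$ to conclude $\alpha\sum_{v\in S}\deg(v)\le 4dk$. The only cosmetic difference is that the paper invokes the preceding claim for the bound on $|\cup_i C_i|$ rather than re-deriving it.
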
 
\begin{proof}
There are at most $(c + \eps d)k \le 2 d k$ edges in $K$.
Therefore, there are at most $4dk$ edge, vertex pairs for which the edge is in $K$ and is incident to some vertex.
By definition, any vertex $v \in S$ participates in at least $\alpha \deg(v)$ such pairs, which implies that $\alpha \sum_{v \in S} \deg(v) \le 4dk$.
\end{proof}

We will say that a node $v$ is {\em small} for a cut $C$ if adding $C$ to $K$ causes $v$'s connected component to shrink to size $< \beta d$. 
We say $v$ is {\em small} for $K$ if it is small for any cut $C$ of size at most $c + \eps d$.

\begin{claim}
If $v$ is small for $K$ then $v \in S$.
\end{claim}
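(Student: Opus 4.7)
The plan is to prove the claim by directly bounding how many of $v$'s edges must lie in $\cup_i C_i$. Fix a witness $C \in \cC$ for which adding $C$ to $K$ causes $v$'s component to drop below $\beta d$. Let $T$ be $v$'s $K$-component and, assuming WLOG that $v \in C^+$, write $T_1 = T \cap C^+$ (so $v \in T_1$, $|T_1| < \beta d$) and $T_2 = T \cap C^-$. Partition $v$'s incident edges into three groups: $E_1$ to $T_1 \setminus \{v\}$, $E_2$ to $T_2$, and $E_3$ to $V \setminus T$. Since every vertex of $V \setminus T$ sits in a different $K$-component than $v$, each edge in $E_3$ crosses some cut of $K$, hence lies in $\cup_i C_i$. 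So the goal is to show $|E_3| \ge \alpha \deg(v)$, i.e.\ that $|E_1| + |E_2|$ is small.

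The bound on $E_1$ is free: $|E_1| \le |T_1| - 1 < \beta d$. The heart of the proof is the bound on $E_2$. Every edge in $E_2$ goes from $v \in C^+$ to some vertex in $T_2 \subseteq C^-$, so $|E_2| \le \deg_{C^-}(v)$, the number of $C$-edges incident to $v$. To bound this, I would consider the swapped cut $C' := C \triangle \{v\}$; because $C$ is non-singleton we have $|C^+| \ge 2$, so $C'$ is a proper cut and $|C'| \ge c$. A direct count gives $|C'| = |C| + \deg(v) - 2\deg_{C^-}(v)$, which rearranges to
\[
|E_2| \;\le\; \deg_{C^-}(v) \;\le\; \frac{|C| + \deg(v) - c}{2} \;\le\; \frac{\eps d + \deg(v)}{2},
\]
using $|C| \le c + \eps d$. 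Combining the two bounds,
\[
|E_3| \;\ge\; \deg(v) - (\beta d - 1) - \frac{\eps d + \deg(v)}{2} \;=\; \frac{\deg(v)}{2} - \beta d - \frac{\eps d}{2} + 1,
\]
and a short check using $\deg(v) \ge d$ together with the hypothesis $\alpha + \beta < (1-\eps)/2$ verifies that the right-hand side is at least $\alpha \deg(v)$. Hence $v \in S$.

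The only nontrivial step is the bound on $|E_2|$: the trivial bound $|E_2| \le |C| \le c + \eps d$ is too weak to close the inequality with the target constant $(1-\eps)/2$. The swap-$v$-across-$C$ trick is essential because it converts the minimality of $c$ as a \emph{global} cut into a per-vertex constraint on how many of $v$'s edges may sit in $C$, trading against $\deg(v)$ rather than against $|C|$. Everything else is bookkeeping once the three edge classes are identified.
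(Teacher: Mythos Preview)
Your proof is correct and uses essentially the same idea as the paper: both arguments hinge on moving $v$ across the cut $C$ and comparing $|C'|$ to the global minimum $c$. The paper phrases it by contradiction---assume $v \notin S$, so $|E_3| < \alpha\deg(v)$, deduce that $\deg_{C^-}(v)$ is large, and then the swapped cut $C'$ has size below $c$---whereas you run the same inequality forward, using $|C'|\ge c$ to upper-bound $\deg_{C^-}(v)$ and hence $|E_2|$, and then lower-bound $|E_3|$ directly. The arithmetic is identical up to rearrangement, and your observation that $E_3$ is \emph{exactly} the set of $v$-incident edges in $\cup_i C_i$ (since edges inside the $K$-component $T$ cannot cross any $K$-cut) is the same observation the paper uses implicitly when it accounts for $v$'s edges as ``in $B$, in $K$, or crossing $C$.''
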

\begin{proof}
Assume for contradiction that $v \not\in S$, and let $C$ denote the cut for which $v$ is small.
Let $B$ be the new connected component of $v$ if we were to add $C$ to $K$.
Then $v$ has fewer than $\beta d$ neighbors in $B$, because $v$ is small and therefore $|B| < \beta d$.
Since $v \not\in S$, $v$ has at most $\alpha \deg(v)$ edges incident in $K$.
Therefore, $v$ has at least $\deg(v) - \alpha \deg(v) - \beta d$ incident edges crossing the cut $C$.

Since $C$ is not a singleton cut, we can move $v$ to the other side of $C$, and decrease its size by 
\[
\deg(v) - 2\alpha \deg(v) - 2\beta d \ge (1 - 2\alpha -2\beta)\deg(v) > \eps d,
\]
since $2(\alpha + \beta) < 1 -\eps$.
This is a contradiction, since we cannot have a cut of size $< c$.
\end{proof}

From this claim, we have that every edge in $\cup_{C \in \cC} C$ is either in $K$ or incident to a vertex in $S$.
To see why, consider any edge $(u,v) \in C$ which not in $K$.
Since $(u,v)$ is not in $K$, $(u,v)$ are in the same component. 
Also, since we did not add $C$ to $K$, it must be the case that one of $u$, $v$ is small for $C$ and therefore small for $K$ and therefore included in $S$.
We use this to upper bound $|\cup_{C \in \cC} C|$, counting all edges in $K$ and all edges incident on $S$.
We have that
\begin{align*}
|\cup_{C \in \cC}| 
&\le |\cup_{C \in K} C| + \sum_{v \in S} \deg(v)\\
&\le 2dk + \frac{4dk}{\alpha}\\
&\le \left(2d\cdot \frac{n}{\beta d} + \frac{4d}{\alpha}\cdot \frac{n}{\beta d}\right) = O(n).
\end{align*}
where we have first applied our bounds on the volume of $S$ and the number of edges in $K$, then applied our bound on $k$.
The conclusion follows.
\end{proof}

\section{Connectivity-preserving sampling in the oracle model}

\label{sec:bksamp}
Now we show how to subsample a graph with arbitrary connectivity to obtain a sparse graph in which all cut values are well-approximated (also known as a \emph{sparsifier}).
The algorithm and analysis are inspired by \cite{BK15-strength_sampling}, but we must make modifications to both in order to optimize query efficiency.
We begin with some definitions.

\begin{definition}
    A graph $G$ is \emph{$k$-strongly-connected} if there is no cut of size less than $k$ in $G$.
    The \emph{strong connectivity} of $G$, denoted $K(G)$, is the size of $G$'s minimum cut.
\end{definition}

\begin{definition}
    Given a graph $G = (V,E)$ and an edge $e = (u,v) \in E$, define $e$'s \emph{strength} $k_e$ to be the maximum of the strong connectivities over all vertex-induced subgraphs of $G$ containing $e$:
    \[
	k_e = \max_{S \subseteq V~:~u,v \in S} K(G[S]),
    \]
where $G[S]$ denotes the vertex-induced subgraph of $G$ on $S$.
\end{definition}

The following theorem, due to Bencz\'{u}r and Karger, shows that if we sample each edge with probability inverseley proportional to its strength, every cut will be well-preserved.
\begin{theorem}[Bencz\'{u}r and Karger \cite{BK15-strength_sampling}]\label{thm:BK}
Let $G=(V,E)$ be an unweighted graph.
For each edge $e \in E$, let $k_e$ denote the edge strength of $e$.
Suppose we are given $\{k'_e\}_{e \in E}$ such that $\tfrac{1}{4}k_e \le k'_e \le k_e$.
Let $H$ be the graph formed by sampling each edge $e$ with probability
\[
p_e = \min\left(\frac{100\ln n}{k'_e \epsilon^2},1\right),
\]
and then including it with weight $1/p_e$.
Then with high probability, $H$ has $O(n\ln n/\epsilon^2)$ edges, and every cut in $H$ has value $(1\pm \epsilon)$ of the original value in $G$.
\end{theorem}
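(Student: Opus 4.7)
The plan is to establish the two assertions of the theorem—the sparsity bound $|E(H)| = O(n \ln n / \epsilon^2)$ and the multiplicative cut-preservation guarantee—separately, using the nested structure of strong components as the central organizing tool.

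For the edge count, I would begin with the classical strength identity $\sum_{e \in E} 1/k_e \le n - 1$, which can be proved by induction on $|V|$: the minimum cut of $G$ has size $K(G)$, each of its edges has strength exactly $K(G)$, so these $K(G)$ edges contribute at most $1$ to the sum; after removing them $G$ splits into components, and induction applies to each. Since $k'_e \ge k_e/4$ we have $p_e \le 400 \ln n / (k_e \epsilon^2)$, so by linearity of expectation $\E[|E(H)|] = \sum_e p_e = O(n \ln n / \epsilon^2)$. A standard Chernoff bound over the independent inclusion indicators then upgrades this to a high-probability statement.

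For the cut-preservation part, the key structural fact is that for every threshold $k$, each connected component of the subgraph $G_{\ge k}$ consisting of edges with strength $\ge k$ is itself $k$-strongly-connected (immediate from the definition of $k_e$ as the maximum connectivity of a vertex-induced subgraph containing $e$). I would partition edges into dyadic strength classes $L_i = \{e \in E : k_e \in [2^i, 2^{i+1})\}$ for $i = 0, 1, \ldots, O(\log n)$ and analyze the contribution of each $L_i$ to an arbitrary cut $C$ in $G$ separately. Within each connected component $\Gamma$ of $G_{\ge 2^i}$, every edge is sampled independently with probability at least $25 \ln n / (2^i \epsilon^2)$, and $\Gamma$ is $2^i$-strongly-connected, so a weighted variant of \pref{lem:sample} (using the $1/p_e$ weights) ensures that for each cut of $\Gamma$, the weighted sampled value concentrates to within $(1 \pm \epsilon)$ of its expectation. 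Using Karger's bound (\pref{lem:karger}) on the number of approximately minimum cuts in $\Gamma$, a union bound inside $\Gamma$ succeeds with probability at least $1 - n^{-\Omega(1)}$, and a further union bound over the $O(\log n)$ levels and over all components handles every scale simultaneously.

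The main obstacle will be combining these layer-wise concentration guarantees into a single multiplicative $(1 \pm \epsilon)$ bound for an arbitrary cut $C$ of $G$. The key observation is that any cut $C$ in $G$ decomposes as $C = \bigsqcup_i (C \cap L_i)$, and each $C \cap L_i$ is contained in a cut of some connected component of $G_{\ge 2^i}$ whose unsampled size is at least $|C \cap L_i|$; level $i$ therefore contributes additive error at most $\epsilon |C \cap L_i|$ to the weighted sum, and summing over $i$ yields total error at most $\epsilon |C|$, exactly the claimed multiplicative bound. The careful bookkeeping required—in particular, ensuring that the Karger union bound at level $i$ is taken over cuts in each $2^i$-strongly-connected component rather than over arbitrary cuts of $G$—is where the nested strong-component structure is indispensable.
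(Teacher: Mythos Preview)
The paper does not prove this theorem; it is stated as a result of Bencz\'ur and Karger and invoked as a black box (the paper's own contribution in this section is \pref{thm:qbk}, which shows how to \emph{implement} strength-based sampling with $\tO(n/\epsilon^2)$ cut queries). So there is no in-paper proof to compare your proposal against.

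Your sketch is nonetheless a faithful outline of the original Bencz\'ur--Karger argument: the sparsity bound via the identity $\sum_e 1/k_e \le n-1$, and cut preservation via the laminar family of strong components together with Karger's cut-counting (\pref{lem:karger}), are exactly their ingredients. The one place that would need tightening in a full proof is the combination step in your last paragraph. The Chernoff-plus-Karger concentration you obtain inside a component $\Gamma$ of $G_{\ge 2^i}$ controls the deviation of the \emph{entire} sampled cut of $\Gamma$ (all edges of strength $\ge 2^i$ crossing it), not just the level-$i$ slice $C\cap L_i$; so the per-level errors do not immediately telescope as $\sum_i \epsilon\,|C\cap L_i| = \epsilon|C|$. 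Bencz\'ur and Karger handle this by processing levels from highest strength downward and contracting the already-analyzed higher-strength components before applying the uniform-sampling lemma at level $i$; with that refinement (or one of the later one-shot weighted-Chernoff presentations) your outline goes through.
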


While Bencz\'ur and Karger give efficient algorithms for computing approximate edge strengths when the graph is known, in our setting we cannot afford to look at every edge.
The following algorithm shows how to compute approximate edge strengths, and how to compute the sparsifier $H$, with $\tO(n/\epsilon^2)$ oracle queries.

\begin{algorithm}[Approximating Edge Strengths (and sampling a sparsifier $H$)]{\color{white}.\label{alg:edgestrngth}}\\
{\bf Input:} An accuracy parameter $\epsilon$, and a cut-query oracle for graph $G$.
    \begin{compactenum}
	\item (Initialize an empty graph $H$ on $n$ vertices).
	\item For $j = 0,\ldots,\log n$, set $\kappa_j = n2^{-j}$ and:
\begin{compactenum}
\item Subsample $G'$ from $G$ by taking each edge of $G$ with probability $q_j = \min(100 \cdot 40 \cdot \frac{\ln n}{\kappa_j},1)$\label{step:sub1}
\item In each connected component of $G'$:
\begin{compactenum}
\item While there exists a cut of size $\le q_j \cdot \frac{4}{5}\kappa_j$, remove the edges from that cut, and then recurse on the two sides.
Let the connected components induced by removing the cut edges be $C_1,\ldots,C_r$.\label{step:minc}
\item For every $i \in [r]$ and every edge (known or unknown) with both endpoints in $C_i$, set the approximate edge strength $k'_e := \frac{1}{2}\kappa_j$ (alternatively, subsample every edge in $C_i \times C_i$ with probability $2q_j/\epsilon^2$ and add it to $H$ with weight $\epsilon^2/2q_j$).\label{step:Hsamp}
\item Update $G$ by contracting $C_i$ for each $i \in [r]$.
\end{compactenum}
\end{compactenum}
    \end{compactenum}
{\bf Output:} The edge strength approximators $\{k'_e\}_{e \in E}$ (or the sparsifier $H$).
\end{algorithm}

\begin{theorem}\label{thm:qbk}
For each edge $e \in G$, the approximate edge strength given in \pref{alg:edgestrngth} is close to the true edge strength,
$\frac{1}{4}k_e \le k_e' \le k_e$.
Furthermore, the algorithm requires $\tO(n/\epsilon^2)$ oracle queries to produce the sparsifier $H$, which satisfies:
    \begin{compactitem}
    \item $H$ has $O(n\ln n/\epsilon^2)$ edges
    \item The maximum weight of any edge $e$ in $H$ will be $O(\epsilon^2 k_e/\ln n)$
    \item Every cut in $H$ is within a $(1\pm \epsilon)$-factor of its value in $G$.
    \end{compactitem}
\end{theorem}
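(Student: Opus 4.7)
The plan is to establish in turn: (A) the sandwich $\tfrac14 k_e \le k_e' \le k_e$ for the strength estimates; (B) the query bound $\tilde O(n/\epsilon^2)$ together with the size bound $|E(H)| = O(n \ln n/\epsilon^2)$; and (C) the remaining properties of $H$ (cut preservation and max edge weight), which will follow from (A) by invoking Theorem \ref{thm:BK}. Throughout I maintain the invariant that the ``current graph'' at the start of phase $j$ is the contraction $G_j$ of $G$ obtained by identifying each component $C_i$ produced in phases $0,\dots,j-1$.

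For the upper bound $k_e' \le k_e$ in (A): if $e$ receives $k_e' = \kappa_j/2$, then both its endpoints lie in some $C_i$ and the recursive splitting has certified that $G'[C_i]$ has no cut of value $\le q_j \cdot \tfrac45 \kappa_j$. Applying the cut-concentration bound of Lemma \ref{lem:sample} to the subsampling that produced $G'$ (with, say, $\epsilon = 1/10$), this lifts to $K(G[C_i]) \ge \tfrac45(1-\epsilon)\kappa_j \ge \kappa_j/2$, so $k_e \ge K(G[C_i]) \ge k_e'$. For the lower bound $k_e' \ge k_e/4$: let $S$ witness $e$'s strength, let $j^*$ be the unique index with $\kappa_{j^*} \le k_e < 2\kappa_{j^*}$, and let $S' \subseteq V(G_{j^*})$ be the super-vertices covering $S$ at the start of phase $j^*$. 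If $e$ was already assigned at a phase $j' < j^*$, then $k_e' = \kappa_{j'}/2 \ge \kappa_{j^*} > k_e/2$. Otherwise, every cut of $G_{j^*}[S']$ descends from a cut of $G[S]$, so $K(G_{j^*}[S']) \ge k_e \ge \kappa_{j^*}$; Lemma \ref{lem:sample} then ensures that in $G'$ every cut of $G_{j^*}[S']$ has value $\ge (1-\epsilon) q_{j^*} k_e$, comfortably above the split threshold $\tfrac45 q_{j^*} \kappa_{j^*}$. Consequently $S'$ stays connected in $G'$ and none of its internal cuts is removed, so $S' \subseteq C_i$ for some $i$ and $e$ receives $k_e' = \kappa_{j^*}/2 \ge k_e/4$.

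For (B), I count queries and $|E(H)|$ phase by phase. Producing $G'$ and the $H$-subsample at phase $j$ takes $\tilde O(|E(G')|)$ and $\tilde O(|E_{H,j}|)$ queries, respectively, via the $O(\log n)$-query uniform random-edge sampler from the corollary following Lemma \ref{lem:1edge} (the degree preprocessing is reusable across phases, at the cost of one additional cut query per contraction as in the proof of Proposition \ref{prop:koracle}). The key estimate is that the contracted graph $G_j$ has $O(n\kappa_j)$ edges: part (A) shows that every edge still present in $G_j$ has true strength strictly less than $\kappa_{j-1} = 2\kappa_j$, and the classical Nash--Williams/Benczúr--Karger forest-decomposition inequality $|\{e : k_e \le K\}| \le nK$ (equivalently $\sum_e 1/k_e \le n-1$) delivers the bound. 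Hence $\E[|E(G')|] = q_j \cdot O(n\kappa_j) = O(n\ln n)$ and $\E[|E_{H,j}|] = O(n \ln n/\epsilon^2)$; summing over the $O(\log n)$ phases gives the $\tilde O(n/\epsilon^2)$ query bound and simultaneously $|E(H)| = O(n\ln n/\epsilon^2)$.

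Part (C) then assembles the pieces: the cut-preservation claim is exactly Theorem \ref{thm:BK} applied with the weights $\{k_e'\}$ whose accuracy is certified by part (A); the chosen sampling probability $2q_j/\epsilon^2$ exceeds the $100\ln n/(k_e' \epsilon^2)$ threshold of the theorem since $k_e' = \kappa_j/2$ and $q_j = 4000\ln n/\kappa_j$. The max-weight bound is immediate: an edge added at phase $j$ receives weight $\epsilon^2/(2q_j) = O(\epsilon^2 \kappa_j/\ln n)$, and part (A) gives $\kappa_j \le 2k_e$. The main obstacle I expect is the lower bound in part (A): one must verify carefully that earlier contractions do not artificially shrink the strong connectivity of the witness subgraph, i.e., that cuts of $G_{j^*}[S']$ really are inherited cuts of $G[S]$ and not new small ones created by the quotient. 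Once this structural point is pinned down, Lemma \ref{lem:sample} closes the loop.
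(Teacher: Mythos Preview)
Your overall architecture matches the paper's: it too splits part (A) into a completeness claim ($k_e' \ge k_e/4$) and a soundness claim ($k_e' \le k_e$), bounds queries via the $O(n\kappa_j)$ edge count on $G_j$ (the paper's \pref{cor:kcon}), and defers the $H$-properties to \pref{thm:BK}. Your lower-bound argument is essentially the paper's, and the ``obstacle'' you flag resolves easily: every supervertex in $S'$ contains a vertex of $S$ by construction, so any bipartition of $S'$ induces a nontrivial bipartition of $S$, and the induced cut in $G_{j^*}[S']$ has at least as many crossing edges as the corresponding cut of $G[S]$.

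The genuine gap is in your upper bound $k_e' \le k_e$. You invoke \pref{lem:sample} to lift ``$G'[C_i]$ has no cut $\le \tfrac45 q_j\kappa_j$'' to ``$G[C_i]$ has no cut $\le \kappa_j/2$'', but \pref{lem:sample} requires the sampling rate to satisfy $p \ge 40\ln n/(\epsilon^2 c)$ where $c$ is the minimum cut of the graph being sampled. At phase $j$ you sample at rate $q_j = \Theta(\ln n/\kappa_j)$, yet the min cut of $G_j$ (or of $G[C_i]$) can be far smaller than $\kappa_j$; indeed, that $K(G[C_i]) \ge \kappa_j/2$ is precisely what you are trying to establish, so it cannot be assumed. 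When the true min cut is tiny, the Karger cut-counting bound underlying \pref{lem:sample} gives no control over the number of cuts of size $\le \kappa_j/2$, and the union bound fails.

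The paper avoids this by abandoning all-cuts concentration on the soundness side. It fixes in advance a sequence $S_1,\dots,S_\ell$ with $\ell \le n$ of cuts, each of value $a_i \le \kappa_j/2$ in $G$, whose successive removal decomposes $G$ into its $(\kappa_j/2)$-strong components. For each such $S_i$ it applies a one-sided Chernoff bound (with deviation parameter $s = \tfrac{4}{5}\kappa_j/a_i - 1 \ge 3/5$, so that $s q_j a_i \ge \tfrac{3}{10} q_j \kappa_j \ge 30\ln n$) to conclude that $S_i$ has value $\le \tfrac45 q_j\kappa_j$ in $G'$ with probability $1 - n^{-\Omega(1)}$. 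A union bound over these at most $n$ specific cuts then guarantees the recursive splitting refines at least down to the $(\kappa_j/2)$-strong components, so every output $C_i$ is $(\kappa_j/2)$-connected in $G$.
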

\begin{proof}
The proof follows from two claims, which we state here and prove later:
\begin{claim}\label{claim:complete}
    At iteration $j =\lceil \log (n/k_e)\rceil$, the edge $e$ is either assigned $k_e' =  \frac{1}{2}\kappa_j = n/2^{j+1} \ge k_e/4$ or has already been assigned a larger value of $k_e'$.
\end{claim}
\begin{claim}\label{claim:sound}
    At iteration $j$, no edges $e$ with $k_e < \frac{1}{2}\kappa_j$ are assigned a strength approximation.
\end{claim}

Given these two claims, we have that the approximate edge strength of every edge is within a factor of two of the true strength.
Furthermore, to construct $H$, each iteration only requires $\tO(n/\eps^2)$ cut queries.
    In \pref{step:sub1}, all components with strong connectivity larger than the current connectivity ($\kappa_j$) have been contracted, so there are no $2\kappa_j$-connected components. By \pref{cor:kcon} (stated shortly), the current $G$ therefore has at most $O(n\kappa_j)$ edges.
    Therefore, in \pref{step:sub1} we have $q_j = \tO(n/|E|)$, and the expected number of sampled edges is therefore just $\tO(n)$, and this step requires only $\tO(n)$ cut queries.
The operations in \pref{step:minc} require no additional queries.
Finally, again by \pref{cor:kcon}, \pref{step:Hsamp} requires at most $\tO(n/\epsilon^2)$ queries, and the consequent step requires no samples.
    The whole process is iterated $O(\log n)$ times, for a total of $\tO(n/\epsilon^2)$ queries.
    The listed properties of $H$ follow from \pref{thm:BK}.

Now, we prove our initial claims.

\medskip

To prove \pref{claim:complete}, consider the strongly connected component of strength $k_e$ that $e$ belongs to, $C_e$.
    Since we subsample edges with probability $q_j = 100 \cdot 40 \cdot \ln n/\kappa_j \ge 100 \cdot 40 \ln n/k_e$, with high probability every cut of $C_e$ has size at least $\frac{9}{10} q_j k_e \ge \frac{9}{10} q_j \kappa_j$in $G'$ (by concentration bounds identical to those in \pref{lem:sample}).
    Therefore, no minimum cut removed in \pref{step:minc} will disconnect $C_e$.
The claim follows.

\medskip

    To prove \pref{claim:sound}, we note that by definition if $k_e < n/2^{j+1}$, then $e$ cannot participate in any vertex-induced component with strong connectivity $\kappa_{j}/2$.
    We will prove that every component $C_1,\ldots,C_r$ created in \pref{step:minc} is at least $(\kappa_{j}/2)$-connected.
    For this, it is necessary to prove that any cut of size less than $\kappa_{j}/2$ is removed.
Let $C = \cup C_i$ be the components of $G'$ after \pref{step:sub1}.
First, we notice that at most $n$ cuts in $C$ are necessary to remove all non-strongly-connected edges.
    Let $S_1,\ldots,S_\ell$ be a sequence of at most $\ell \le n$ cuts with sizes $a_1,\ldots,a_\ell$ respectively, so that $a_i \le \kappa_{j}/2$ in $G$ when restricted to the vertex-induced subgraph given by the vertices of $C$.
Let $a_1',\ldots,a_\ell'$ be the sizes of the cuts $S_1,\ldots,S_\ell$ in C (in the subsampled graph $G'$).

By a Chernoff bound,
\[
\Pr[ a_i' - q_j a_i \ge s \cdot q_j a_i]
\le \begin{cases}
\exp\left(-s q_j a_i/3 \right) & s \ge 1\\
\exp\left(-s^2 q_j a_i /3\right) & s \le 1
\end{cases}
\]
We choose $s = \frac{4}{5} \frac{\kappa_{j}}{a_i } - 1$ so that $(1+s)q_j a_i = q_j\cdot \frac{4}{5} \kappa_{j}$.
Then because $a_i \le \kappa_{j}/2$,
\[
    sq_j a_i = q_j\cdot \frac{4}{5} \cdot \kappa_{j} - q_j a_i \ge q_j \cdot \frac{3}{10} \cdot \kappa_{j} \ge 30\ln n,
\]
and and because $a_i \le \kappa_{j}/2$, $s \ge \frac{3}{5}$, so
\[
s^2 q_j a_i \ge 18\ln n.
\]
Thus, the probability that any of the cuts $S_i$ has size $a_i' \ge \frac{4}{5}q_j \kappa_j$ in the subsampled graph $G'$ is at most $n^{-6}$.
Taking a union bound over all of the $S_i$, we have that with high probability, all of the $S_i$ will be small enough in the subsampled graph to be removed.
\end{proof}

To argue that we did not sample too many edges (or require too many oracle queries) in \pref{step:sub1}, we must bound the number of edges with strength at least $k$ and at most $2k$.
The following lemma is the crux of the argument (this lemma is not novel and has appeared elsewhere, e.g. \cite{BK15-strength_sampling}).
\begin{lemma}\label{lem:kcon}
Let $G = (V,E)$ be a weighted graph without self-loops, and let $|V|=n$.
    Denote by $w(E)$ the total weight of the edges in $E$.
    If $w(E) \ge d(n-1)$, then $G$ contains a strongly $d$-connected component.
\end{lemma}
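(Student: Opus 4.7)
I will prove the statement by strong induction on $n = |V|$. The base cases $n = 1$ and $n = 2$ are immediate: a single vertex is trivially strongly $d$-connected (it has no cuts), and if $n=2$ the hypothesis gives $w(E) \ge d$, which is precisely the size of the unique nontrivial cut, so $G$ itself is strongly $d$-connected.

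For the inductive step, suppose the claim holds for all graphs on fewer than $n$ vertices, and let $G = (V,E)$ satisfy $w(E) \ge d(n-1)$. If $G$ itself has no cut of weight less than $d$, then $G$ is already strongly $d$-connected and we are done. Otherwise, fix a cut of weight strictly less than $d$, and let $V_1, V_2$ be the two sides with $|V_i| = n_i$ and $n_1 + n_2 = n$. Let $E_i$ denote the edges of $G[V_i]$ and let $E_{12}$ denote the cut edges, so $w(E_{12}) < d$. Then
\begin{equation*}
w(E_1) + w(E_2) \;=\; w(E) - w(E_{12}) \;>\; d(n-1) - d \;=\; d(n_1 - 1) + d(n_2 - 1).
\end{equation*}
By pigeonhole, there exists $i \in \{1,2\}$ with $w(E_i) > d(n_i - 1)$, and in particular $w(E_i) \ge d(n_i - 1)$. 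Since $G[V_i]$ has no self-loops and $w(E_i) > 0$ forces $n_i \ge 2$, we may apply the inductive hypothesis to $G[V_i]$, obtaining a vertex-induced subgraph of $G[V_i]$ (hence of $G$) that is strongly $d$-connected. This completes the induction.

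\textbf{Main obstacle.} The argument is almost entirely bookkeeping; the only place care is required is ensuring that the pigeonhole step yields the weak inequality $w(E_i) \ge d(n_i - 1)$ needed to reapply the hypothesis. This relies on the strict inequality $w(E_{12}) < d$ (guaranteed because $G$ fails to be strongly $d$-connected), which is what lets us absorb the loss of $d$ from the cut edges and still meet the threshold on one side. Once that is in place, induction finishes the job.
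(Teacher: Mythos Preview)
Your proof is correct and follows essentially the same inductive argument as the paper: split along a light cut, observe that the two sides together retain more than $d(n-2)$ weight, and recurse. The only cosmetic difference is that the paper phrases the induction as a minimal-counterexample contradiction (both sides would have to satisfy $w(E_i)\le d(n_i-1)$, summing to $\le d(n-2)$), whereas you use pigeonhole to pick one side and recurse directly.
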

\begin{proof}
The proof is by induction---if $n = 2$, the conclusion is obvious.
Now, by contradiction, let $n$ be the smallest integer for which this is not the case.
    Since $G$ is not $d$-connected, by removing a set of edges of total weight $<d$, we can split $G$ into two components $C_1,C_2$ of size $n_1$ and $n_2$ with edge sets $E_1$ and $E_2$, so that the total weight of edges amongst the two parts is at least $w(E_1) + w(E_2) \ge d(n-2)+ 1$.
    Since $G$ and all of its induced subgraphs have no $d$-strongly-connected subgraphs, by the induction hypothesis both $C_1$ and $C_2$ must have $w(E_1) \le d(n_1 - 1)$ and $w(E_2) \le d(n_2 - 1)$.
But then
    $w(E_1) + w(E_2) \le d(n_1 + n_1 - 2) = d(n-2)$, which is a contradiction.
This completes the proof.
\end{proof}

\begin{corollary}\label{cor:kcon}
In an graph on $n$ vertices which has strong connectivity $k$ and no components with strong connectivity $\ge 2k$, there are $\Theta(nk)$ edges.
\end{corollary}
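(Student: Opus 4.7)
The plan is to prove the two directions of $\Theta(nk)$ separately, each being essentially a one-line consequence of facts already available in the excerpt.

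For the lower bound $|E| = \Omega(nk)$, I would use the fact that strong connectivity $k$ means no cut has size less than $k$; in particular, for every vertex $v$, the singleton cut $\{v\}$ has size at least $k$, so $\deg(v) \geq k$. Summing over all vertices and dividing by two gives $|E| \geq nk/2$.

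For the upper bound $|E| = O(nk)$, I would apply \pref{lem:kcon} in its contrapositive form with $d = 2k$: if the graph had total edge weight $w(E) \geq 2k(n-1)$, then some vertex-induced subgraph would be strongly $2k$-connected, contradicting the hypothesis that no component has strong connectivity $\geq 2k$. Hence $w(E) < 2k(n-1) = O(nk)$, and since the graph here is unweighted in the statement (or just treating each edge with its natural weight), this bounds $|E|$.

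Combining the two bounds yields $|E| = \Theta(nk)$, as desired. There isn't really an obstacle here; both directions are immediate given \pref{lem:kcon} and the definition of strong connectivity, which is presumably why the authors stated this as a corollary rather than a lemma.
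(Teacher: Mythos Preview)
Your proposal is correct and matches the paper's own proof essentially verbatim: the lower bound comes from the minimum-degree observation, and the upper bound is the contrapositive of \pref{lem:kcon} with $d=2k$.
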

\begin{proof}
In a strongly $k$-connected component, every vertex must have degree at least $k$, which gives the lower bound.
To see the upper bound, we invoke \pref{lem:kcon} (which gives the desired conclusion by taking $d = 2k$).
\end{proof}

\section{Global min-cut revisited}

\label{sec:minc2}

Now that we are in posession of a more sensitive sampling algorithm, we give a simplified global min cut algorithm (``simplified'' by pushing all the complexity to the sampling procedure).

\begin{algorithm}[Simpler global Min Cut with $\tO(n)$ oracle queries]
    \label{alg:mincut2}{\color{white}.}\\
    {\bf Input:} Oracle access to the cut values of an unweighted simple graph $G$.
    \begin{compactenum}
	\item Compute all of the single-vertex cuts.\label{step:single2}
    \item Compute a sparsifier $H$ of $G$ using \pref{alg:edgestrngth} with $G$ and with small constant $\epsilon$.\label{step:bk2}
		\item Find all non-singleton cuts of size at most $(1+3\epsilon)$ times the size of the minimum cut in $H$, and contract any edge which is not in such a cut, call the resulting graph $G'$. \label{step:contract2}
		\item If the number of edges between the super-vertices of $G'$ is $O(n)$, learn all of the edges between the super-vertices of $G'$, and compute the minimum cut.\label{step:comp2}
    \end{compactenum}
    {\bf Output:} Return the best cut seen over the course of the algorithm.
\end{algorithm}

\begin{theorem}
    \pref{alg:mincut2} uses $\tO(n)$ queries and finds the exact minimum cut in $G$ with high probability.
\end{theorem}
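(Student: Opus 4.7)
The proof follows the same two-pronged template as the proof of \pref{thm:gmc} for \pref{alg:mincut}: we argue correctness by showing that the true minimum cut survives contraction, and we argue query efficiency by bounding the number of edges that remain to be learned between super-vertices.

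\textbf{Correctness.} Singleton minimum cuts are handled in \pref{step:single2}, so assume the minimum cut value is $c$ and the minimum cut $C^\star$ is non-singleton. By \pref{thm:qbk} applied to $G$, with high probability every cut of $G$ is preserved in $H$ up to a $(1\pm\eps)$ factor. In particular, $\mathrm{mincut}(H) \ge (1-\eps)c$, while $C^\star$ induces a cut in $H$ of value at most $(1+\eps)c$. For sufficiently small constant $\eps$ (say $\eps \le 1/3$), this gives
\[
(1+\eps)c \;\le\; \frac{1+\eps}{1-\eps}\,\mathrm{mincut}(H) \;\le\; (1+3\eps)\,\mathrm{mincut}(H),
\]
so $C^\star$ is itself one of the non-singleton cuts considered in \pref{step:contract2}. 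Therefore no edge of $C^\star$ is contracted, and \pref{step:comp2} (once all remaining edges are learned) recovers the exact minimum cut.

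\textbf{Query complexity.} \pref{step:single2} uses $n$ queries; \pref{step:bk2} uses $\tO(n)$ queries by \pref{thm:qbk} with constant $\eps$; \pref{step:contract2} is a purely computational step on the already-constructed $H$ and requires no queries. For \pref{step:comp2}, every edge remaining between super-vertices of $G'$ must, by construction, be separated by some non-singleton cut of $H$ of value at most $(1+3\eps)\,\mathrm{mincut}(H)$; by the cut-preservation guarantee of \pref{thm:qbk} this edge is in a non-singleton cut of $G$ of value at most $(1+O(\eps))c \le c + O(\eps)d_{\min}$. By \pref{lem:cover}, only $O(n)$ such edges exist in $G$, and each can be located in $O(\log n)$ queries using \pref{lem:1edge} (adapted in the standard way to find unknown edges incident to a super-vertex), for $\tO(n)$ queries in total.

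\textbf{Main obstacle.} The only non-routine step is verifying that the set of edges between super-vertices is $O(n)$. This combines two nontrivial ingredients: the sharp cut-preservation of \pref{alg:edgestrngth}, which ensures that approximate min cuts of $H$ genuinely track approximate min cuts of $G$, and the structural bound of \pref{lem:cover} on the number of edges in non-singleton approximate min cuts of $G$. Without both, the final learning step could cost $\tO(n^2)$ queries rather than $\tO(n)$.
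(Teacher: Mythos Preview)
Your proof is correct and follows essentially the same approach as the paper's: both argue that cut preservation in $H$ keeps $C^\star$ among the $(1+3\eps)$-approximate minimum cuts of $H$ so that no edge of $C^\star$ is contracted, and both invoke \pref{lem:cover} to bound the number of surviving edges by $O(n)$. Your write-up is slightly more explicit than the paper's in spelling out why approximate min cuts of $H$ pull back to approximate min cuts of $G$ (and hence why \pref{lem:cover} applies to $G$), but the argument is the same.
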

\begin{proof}
Let $C^*$ be a minimum cut in $G$, and suppose the size of $C^*$ is $c$.
By \pref{thm:qbk}, the sampling performed in \pref{step:bk2} will ensure that with high probability the minimum cut of $H$ has value at least $(1-\epsilon)c$, and that the size of $C^*$ in $H$ is at most $(1+\epsilon)c$.
For $\epsilon < 1/3$,
\[
\frac{(1+\epsilon)c}{(1-\epsilon)c} = 1 + \frac{2\epsilon}{1-\eps} < 1 + 3 \epsilon\mper
\]
Therefore, in \pref{step:contract2} no edge in $C^*$ will be contracted.
Finally, by \pref{lem:cover} at most $O(n)$ edges are left between the super-vertices of $G'$ in \pref{step:comp2} (whp, assuming that all cuts are indeed preserved within $(1 \pm \epsilon)$).
Therefore, if $C^*$ is a non-singleton cut, it (or a cut of the same size) will be found.
No step requires more than $\tO(n)$ queries.
\end{proof}

\section{$s-t$ min-cut in $\tO(n^{5/3})$ queries}
\label{sec:st}

Now, we use the low-query sampling algorithm developed in \pref{sec:bksamp} to obtain sub-quadratic query complexity for computing min $s$-$t$ cuts in undirected and unweighted graphs.
Our algorithm follows the same general strategy as the minimum cut algorithm from the previous section: sample a connectivity-preserving weighted graph from $G$, then compress the graph by contracting edges that do not participate in the minimum cut.

\begin{algorithm} [$s$-$t$ min cut with $\tO(n^{5/3})$ queries]\label{alg:st-cut}{\color{white}.}\\
    {\bf Input:} Oracle access to the cut values of an unweighted simple graph $G$.
\begin{compactenum}
    \item Compute a sparsifier $H$ of $G$ using \pref{alg:edgestrngth} with $G$ and with $\epsilon = n^{-1/3}$.\label{step:bksamp}
    \item Compute a maximum $s$-$t$ flow in $H$, and remove the participating edges from $H$; denote the result $H'$\label{step:subflow}
	\item Obtain $G'$ from $G$ by contracting all components that are $3 \epsilon\cdot c$-connected in $H'$.
	\item Learn all edges of $G'$ and compute the minimum $s$-$t$ cut in the resulting graph.\label{step:getcut}
\end{compactenum}
    {\bf Output:} The minimum $s$-$t$ cut computed in \pref{step:getcut}.
\end{algorithm}

\begin{theorem}
    \pref{alg:st-cut} finds an exact $s$-$t$ minimum cut in $\tilde O(n^{5/3})$ oracle calls.
\end{theorem}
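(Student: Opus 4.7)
The proof splits into two parts.

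\textbf{Correctness.} By \pref{thm:qbk}, $H$ is a $(1\pm\epsilon)$-cut sparsifier of $G$, so the max $s$-$t$ flow $f$ in $H$ equals the min $s$-$t$ cut in $H$ and satisfies $f \in [(1-\epsilon)c,(1+\epsilon)c]$, where $c$ is the true min $s$-$t$ cut in $G$. Since every $s$-$t$ cut of $G$ carries exactly $f$ units of flow across it in $H$, the $H'$-value of any $s$-$t$ cut $(A,B)$ equals $v_H(A,B)-f$; for a minimum cut this is at most $2\epsilon c$. If such a minimum cut were to split a $3\epsilon c$-connected component $U$ of $H'$, the induced internal cut of $U$ would contribute at least $3\epsilon c$ to the residual cut value, contradicting the $2\epsilon c$ bound. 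Hence Step~3 never contracts across a minimum $s$-$t$ cut, and Step~4 recovers it exactly.

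\textbf{Query complexity.} Step~1 costs $\tO(n/\epsilon^2)=\tO(n^{5/3})$ queries by \pref{thm:qbk}; Steps~2 and~3 need none; Step~4 costs $\tO(|E(G')|)$ queries by \pref{lem:1edge}. So what remains is to bound the number of $G$-edges that lie between super-vertices (the ``cross-edges'').

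My plan is first to show that every cross-edge $e$ has edge strength $k_e=O(c)$. If $e$'s strength-witness subgraph $S$ has $K(G[S])\geq k_e$, then $K(H[S])\geq (1-\epsilon)k_e$ by the sparsifier, and since the max flow can reduce any internal cut of $S$ by at most $f=O(c)$, we get $K(H'[S])\geq (1-\epsilon)k_e-O(c)$; once $k_e$ exceeds a small constant times $c$ this exceeds $3\epsilon c$, forcing $S$ into a single super-vertex and contradicting that $e$ is a cross-edge. Given this, the Bencz\'ur--Karger level decomposition via \pref{cor:kcon} bounds the number of $G$-edges with $k_e=O(c)$ by $\tO(nc)$. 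The main obstacle is to push this to $\tO(n^{5/3})$ uniformly in $c$: for $c=O(n^{2/3})$ the bound $\tO(nc)=\tO(n^{5/3})$ is immediate; for larger $c$, I would separate cross-edges into low-strength ones with $k_e\leq \ln n/\epsilon^2\approx n^{2/3}\log n$, which lie in $H$ deterministically (since $p_e=1$) and so number at most $|E(H)|=\tO(n^{5/3})$, and high-strength ones, which I would control using the residual inequality $w(H'_Q)\leq 3\epsilon c(k-1)\leq \tO(n^{5/3})$ (coming from the iterative cut-removal procedure used to define the super-vertices, since $\epsilon cn\leq n^{5/3}$ whenever $c\leq n$, together with the fact that every $H$-edge has weight at least $1$). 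Combining these estimates yields $|E(G')|=\tO(n^{5/3})$ and hence total query complexity $\tO(n^{5/3})$.
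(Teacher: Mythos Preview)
Your correctness argument is essentially right, with one small imprecision: you write that the $H'$-value of an $s$-$t$ cut ``equals $v_H(A,B)-f$'', but even a non-circular flow path can cross a fixed $s$-$t$ cut more than once, so the correct relation is $v_{H'}(A,B)\le v_H(A,B)-f$. The bound ``at most $2\epsilon c$'' for a true min cut is unaffected, and the contradiction with $3\epsilon c$-connectivity goes through.

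The query-complexity argument, however, has a real gap. Your core step is the claim that ``the max flow can reduce any internal cut of $S$ by at most $f=O(c)$''. This is false. For a cut $(A,S\setminus A)$ inside $S$ that does not separate $s$ from $t$, a single unit-flow path can cross it $\Theta(|S|)$ times: take the path $s\to v_1\to v_2\to\cdots\to v_k\to t$ through $S=\{v_1,\dots,v_k\}$ and let $A$ be the odd-indexed $v_i$'s. Removing the flow can therefore strip $\Theta(f\cdot|S|)$ weight from that internal cut, not $O(f)$, so the inequality $K(H'[S])\ge(1-\epsilon)k_e-O(c)$ does not follow and the implication ``cross-edge $\Rightarrow k_e=O(c)$'' breaks. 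A secondary issue is that \pref{thm:qbk} only guarantees that \emph{global} cuts $(A,V\setminus A)$ are preserved; it does not directly give $K(H[S])\ge(1-\epsilon)K(G[S])$ for an induced subgraph $S$, so ``by the sparsifier'' needs its own argument even before the flow step.

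The paper avoids both difficulties by never reasoning about individual edge strengths of cross-edges. Instead it bounds the \emph{total weight} of the flow $F$ via a separate structural lemma (\pref{lem:flow-cover}): any non-circular $s$-$t$ flow of value $f$ in a graph with integer edge weights in $[0,W]$ uses total edge weight $O(n\sqrt{fW})$. Since every edge of $H$ has weight $O(\epsilon^2 n/\ln n)$ (a consequence of \pref{thm:qbk}) and $f\le(1+\epsilon)n$, the flow has total weight $O(\epsilon n^2)$. Independently, \pref{lem:kcon} bounds the $H'$-weight surviving after contracting the $3\epsilon n$-connected components by $O(\epsilon n^2)$. Summing gives $O(\epsilon n^2)$ total $H$-weight between super-vertices, and the sparsifier property applied to the super-vertex degree cuts converts this to $O(\epsilon n^2)=O(n^{5/3})$ edges in $G'$. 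The ingredient you are missing is precisely \pref{lem:flow-cover}; your case split on the size of $c$ and on low- versus high-strength edges is trying to work around the absence of this lemma, but the high-strength branch (``residual inequality $w(H'_Q)\le 3\epsilon c(k-1)$'') only controls $H'$-weight, not the number of $G$-edges, and cannot recover the bound without some control on the flow's footprint.
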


\begin{proof}
    Our proof is based on the following claim:
    \begin{claim}
	The number of edges between the super-vertices in $G'$ is at most $O(n^{5/3})$.
    \end{claim}
	 The sparsifier $H$ output in \pref{step:bksamp} by \pref{alg:edgestrngth} has $O(n\ln n/\epsilon^2)$ edges and preserves all cuts to within a multiplicative $(1\pm \epsilon)$.
    In particular the value of the $s$-$t$ maximum flow is at most $n$, and so it is preserved to within an additive $\pm \epsilon n$.

    Then, in \pref{step:subflow} we compute an (exact) $s$-$t$ maximum flow $F$ in $H$, and subtract $F$ from $H$ to obtain the graph $H'$.
    Note that without loss of generality, $F$ is integral and non-circular.
Let $f_H, f_G \leq n$ denote the size of the minimum $s$-$t$-cut in $G,H$ (respectively).
Since each edge has strength at most $n$ in $G$, each edge has weight at most $\epsilon^2 n$ in $H$
    Therefore, by \pref{lem:flow-cover} (stated shortly), the total weight in flow $F$ is at most $O(n\sqrt{f_H \cdot n \epsilon^2})$; since  $f_H \approx f_G \leq n$ (up to a $(1\pm \epsilon)$ factor), this simplifies to $O(\epsilon n^2)$ total weight.

If we could subtract exactly the maximum flow in $G$, we could safely contract all remaining connected components (since the max flow certainly saturates a min $s$-$t$ cut). Since the (exact) $s$-$t$ maximum flow in $H$ approximates the flow in $G$ to within an additive $\pm \epsilon n$ error, we claim that we can safely contract any $3\epsilon n$-connected component in $H'$:

Let $C$ be a $3\epsilon n$-connected component in $H'$.
    Assume by contradiction that there is a minimum $s$-$t$ cut that separates $C$. Because we preserved all cuts to within a multiplicative $(1\pm \epsilon)$, the same cut has value at most $(1+ \epsilon) f_G \leq f_H + 2\epsilon n$ in $H$. But because there is an $s$-$t$ flow of value $f_H$ in $H\setminus H'$, all cuts have value at least $f_H$ in $H \setminus H'$. Therefore, this approximate min $s$-$t$ cut \emph{must} cut at most $2\epsilon n$ edges in $H'$. So immediately by definition of $k$-connectivity, we obtain a contradiction to this cut possibly separating a $3\epsilon n$-connected component in $H'$. This establishes the correctness of the algorithm, since the exact min $s$-$t$ cut is not altered in \pref{step:subflow}.

    Once we contract the $3\epsilon n$-connected components in $H'$, we are left (by \pref{lem:kcon}) with a total weight of at most $3 \epsilon n^{2}$ in $H'$.

	After applying the same contractions to $H$, we have that the total remaining weight is at most $3\epsilon n^2 + O(\epsilon n^2) = O(\epsilon n^2)$ (the sum of the flow and $H'$); and therefore, since the cut around each of the contracted vertices is the same in $G$ and $H$ up to a factor of $(1\pm \epsilon)$, we have that the number of edges remaining in the contracted graph $G'$ is also $|E'| = O(\epsilon n^2)$.

	The total number of queries necessary is $\tO(n/\epsilon^2)$ in \pref{step:bksamp}, and then another $|E'|$ in step \pref{step:getcut}.
	Choosing $\epsilon = n^{-1/3}$ balances the terms, so that we have $|E'|,n/\epsilon^2 \le n^{5/3}$.
	This concludes the proof.
\end{proof}

\subsection{Covering $s$-$t$ min cuts with $O(n^{3/2})$ edges}
\begin{lemma}[Flow cover]\label{lem:flow-cover}
In an undirected graph $G=(V,E)$ with integral weights from $[0,W]$,
every non-circular $s$-$t$ flow (for any $s,t \in V$) of value $f$ uses edges of at most $O(n\sqrt{fW})$ total weight.
\end{lemma}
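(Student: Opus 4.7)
The plan is to decompose $F$ into $s$-$t$ paths in shortest-first order and then bound $T := \sum_e w_e$ by integrating a BFS-layer estimate against the distribution of path lengths. Concretely, I would construct a decomposition $(P_1,\phi_1),(P_2,\phi_2),\ldots$ greedily: at each step, find a shortest $s$-$t$ path $P_k$ in the support of the remaining flow, push its bottleneck $\phi_k \le W$, and subtract. Non-circularity of $F$ lets us fix an orientation of each edge in its direction of flow, so ``subtracting along a support path'' behaves like a DAG subtraction and is unambiguous. Because each iteration only removes edges from the support and edge removals cannot decrease $s$-$t$ distances, the lengths $\ell_1 \le \ell_2 \le \cdots$ are non-decreasing; by construction $\sum_k \phi_k = f$ and $T = \sum_k \phi_k \ell_k$.

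The heart of the proof is an upper bound on $S(d) := \sum_{k:\ell_k \ge d} \phi_k$, the value of the residual flow at the moment the first path of length $\ge d$ is extracted. At that moment no $s$-$t$ path of length $< d$ exists in the residual support, so running BFS from $s$ yields layers $V_0,V_1,\ldots,V_D$ with $D \ge d$ and every support edge lying between consecutive layers. By AM-GM,
\[
\sum_{i=0}^{D-1} \sqrt{|V_i|\cdot|V_{i+1}|} \;\le\; \sum_{i=0}^{D-1}\tfrac{|V_i|+|V_{i+1}|}{2} \;\le\; n,
\]
so some layer transition has $|V_i|\cdot|V_{i+1}| \le (n/D)^2 \le n^2/d^2$ edges. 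Each has residual capacity $\le W$, yielding an $s$-$t$ cut of capacity at most $n^2W/d^2$ in the residual, and hence $S(d) \le \min\!\bigl(f,\; n^2W/d^2\bigr)$.

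Finally, the layer-cake identity $T = \sum_k \phi_k \ell_k = \int_0^\infty S(d)\,dd$ closes the argument: splitting the integral at $d_\star := n\sqrt{W/f}$ gives
\[
T \;\le\; \int_0^{d_\star}\! f \, dd \;+\; \int_{d_\star}^{\infty}\!\frac{n^2 W}{d^2}\, dd \;=\; f\cdot d_\star + \frac{n^2W}{d_\star} \;=\; 2n\sqrt{fW}\mper
\]
The main technical obstacle is justifying the shortest-first decomposition and the monotonicity of the $\ell_k$ cleanly in the undirected, non-circular setting; once that is in place, the BFS-layer estimate $S(d) \lesssim n^2W/d^2$ and the above integral split are routine.
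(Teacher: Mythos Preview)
Your approach is correct and genuinely different from the paper's. The paper works directly with the flow DAG: it fixes a topological order, buckets the edges by their flow value into dyadic ranges $[w,2w)$, and for each bucket $B_w$ uses a double-counting argument (edges versus the $n-1$ prefix cuts of the topological order) to get $(n-1)\cdot f/w \ge |B_w|\cdot \ell_w$, where $\ell_w$ is the average topological length in $B_w$. A Cauchy--Schwarz step shows $\ell_w \ge d_w/2$ for $d_w$ the average in-degree in $B_w$, whence $|B_w| = O(n\sqrt{f/w})$, and summing $w\cdot|B_w|$ over the $O(\log W)$ buckets gives $O(n\sqrt{fW})$. Your shortest-path-first decomposition together with the BFS layer-cut bound $S(d)\le n^2W/d^2$ is essentially the Even--Tarjan/Karzanov argument for bounding Dinic phases, repurposed to bound total flow cost; the layer-cake integral then gives the clean $2n\sqrt{fW}$. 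Your route is more classical and yields an explicit constant; the paper's route avoids the path decomposition and the monotonicity bookkeeping entirely, working only with the static flow DAG.

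One small imprecision worth fixing: it is not true that ``every support edge lies between consecutive BFS layers'' --- even in a DAG, directed BFS can produce same-layer and back edges. What you actually need (and what holds) is that the only edges crossing the layer cut $(V_{\le i},\,V_{>i})$ in the \emph{forward} direction go from $V_i$ to $V_{i+1}$; since the net flow across that cut is exactly $S(d)$, you still get $S(d)\le |V_i|\cdot|V_{i+1}|\cdot W$. This is a one-line repair and does not affect the rest of the argument.
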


\begin{proof}
Consider the induced {\em flow graph}, i.e. the DAG that has an edge from $u$ to $v$ with weight equal to the flow from $u$ to $v$.
Fix a topological sorting of the flow graph. We define the {\em length} of an edge to be the difference between its endpoints in the sorting.

Bucket all the edges into $O(\log W)$ buckets according to their weights, with bucket $B_w$ containing all the edges of weight in $[w,2w-1]$. Let $d_w$ denote the (unweighted) average incoming degree when only considering edges from $B_w$, and let $\ell_w$ denote the (unweighted) average length of edges in $B_w$.

For each $i \in [n-1]$, at most $f/w$ edges from $B_w$ cross the cut $C_i$ between the first $i$ vertices and the last $n-i$ vertices in the toplogical ordering (because each such edge has weight $\geq w$ and the total flow crossing any of these cuts is exactly $f$). Similarly, the number of cuts each that edge in $B_w$ crosses is exactly equal to its length. We can count the total number of pairs $(e, C_i)$ such that edge $e\in B_w$ crosses cut $C_i$ in two different ways: summing across $(n-1)$ cuts, or summing across $|B_w|$ edges. We therefore have that
\begin{gather}\label{eq:Vf>Eell}
(n-1) \cdot f/w \geq |B_w| \cdot \ell_w.
\end{gather}

For each vertex $v$, each incoming edge has a different length; therefore, the average length among its incoming edges is at least half of its degree. By the Cauchy-Schwartz inequality it follows that this is also true on average across all edges and vertices:\footnote{To see this, we can compute the average length ($\ell(e)$ denotes the length of edge $e$) as $(\sum_ v\sum_{e \text{ incoming to } v} \ell(e))/\sum_v d_v \geq (\sum_{v} d_v^2/2 )/\sum_v d_v \geq ((\sum_v d_v)^2/2n)/\sum_v d_v \geq \sum_v d_v/2n = d_w/2$.}
\begin{gather}\label{eq:ell>d}
\ell_w \geq d_w/2.
\end{gather}

Observe also that  $|B_w| = n\cdot d_w$.
Combining this observation with Inequalities \eqref{eq:Vf>Eell} and \eqref{eq:ell>d}, we have that
    \begin{align*}
f/w \geq \ell_w \cdot \left(\frac{B_w}{n}\right) \geq d_w^2 /2.
\end{align*}
In particular, for each bucket, the number of edges is bounded by $|B_w| = O(n \sqrt{f/w})$

Therefore, the total weight of all edges is bounded by
\begin{gather*}
\sum_w |B_w| \cdot 2w = \sum_w O(n \sqrt{f w}) = O(n \sqrt{f W}).
\end{gather*}

\ignore{
\medskip

    To see that this is tight, consider the following layered graph: the vertex set $V$ is the union of the sets $\{s\}$, $B_s$,$L_1,\ldots,L_{k}$, $B_t$ and $\{t\}$.
   The edge set $E$ is as follows: $s$ is adjacent to every vertex in $B_s$, $t$ is adjacent to every vertex in $B_t$, for every $0 \le i \le k-1$ there is a complete bipartite graph between $L_i$ and $L_{i+1}$, as well as between $B_s$ and $L_1$ and between $L_k$ and $B_t$.

    We set $|B_t| = |B_s| = n$, set $|L_i| = \sqrt{n}$, and set $k = \sqrt{n}$.
    There are thus a total of $N = 3n + 2$ vertices in the graph.
    Furthermore, it is easy to see that every edge between each $L_i$ and $L_{i+1}$ participates in a min $s$-$t$ cut of value $\binom{\sqrt n}{2}$.
    Therefore, at least $O(N^{3/2})$ edges are needed for the cover.}
\end{proof}

\section*{Acknowledgements}
The authors thank Robert Krauthgamer, Satish Rao, Aaron Schild, and anonymous reviewers for helpful conversations and suggestions.
We also thank Troy Lee for bibliographical pointers and for pointing out errors in a previous version of the proof of \pref{lem:cover}.

\addreferencesection
\bibliographystyle{amsalpha}
\bibliography{mincut}

\end{document}